\newtheorem{theorem}{Theorem}[section]
\newtheorem{lemma}[theorem]{Lemma}
\newtheorem{claim}[theorem]{Claim}
\newtheorem{definition}[theorem]{Definition}
\newtheorem{observation}[theorem]{Observation}
\newtheorem{proposition}[theorem]{Proposition}
\newcommand{\qedclaim}{\hfill $\diamond$ \medskip}
\newenvironment{proofclaim}{\noindent{\em Proof.}~}{\qedclaim}
\theoremstyle{definition}
\newtheorem{algo}[theorem]{Algorithm}
\let\c@algocf\relax 
\newaliascnt{algocf}{theorem}
\newcommandx{\defparproblem}[4][]{
    \vspace{3mm}
    \noindent\fbox{
        \begin{minipage}{0.96\textwidth}
            \begin{tabular*}{\textwidth}{@{\extracolsep{\fill}}lr} {#1} \\ 
            \end{tabular*}
            {\textbf{Input:}} {#2}  \\
            {\textbf{Parameter:}} {#3}  \\
            {\textbf{Question:}} {#4}
        \end{minipage}
    }
    \vspace{4mm}
}
\newcommandx{\defsimpleproblem}[3][]{
    \vspace{3mm}
    \noindent\fbox{
        \begin{minipage}{0.96\textwidth}
            \begin{tabular*}{\textwidth}{@{\extracolsep{\fill}}lr} {#1} \\ 
            \end{tabular*}
            {\textbf{Input:}} {#2}  \\
            {\textbf{Question:}} {#3}
        \end{minipage}
    }
    \vspace{4mm}
}
\newcommand{\ie}{i.\,e.\@\xspace}
\newcommand{\motPath}{\textsc{Motivating Path}\xspace}
\newcommand{\motSG}{\textsc{Motivating Subgraph}\xspace}
\newcommand{\motSGbranch}{\textsc{Simple Motivating Subgraph}\xspace}
\newcommand{\subSum}{\textsc{Subset Sum}}
\newcommand{\kLinkage}{$\ell$-\textsc{Linkage}}
\newcommand{\exactMotKlink}{\textsc{Exact Motivating $k$-Linkage in DAG}}
\title{Time-inconsistent Planning: \\ Simple Motivation Is Hard to Find}
\author{
  Fedor V.~Fomin
    \\
  Department of Informatics\\
  University of Bergen\\
  Norway \\
  \texttt{fedor.fomin@ii.uib.no} \\
   \And
 Torstein J.\,F.~Str{\o}mme
 \\
  Department of Informatics\\
  University of Bergen\\
  Norway \\
  \texttt{torstein.stromme@ii.uib.no} \\
}
\begin{document}
\maketitle

\begin{abstract}
    People sometimes act differently when making decisions affecting the present moment versus decisions affecting the future only. This is referred to as time-inconsistent behavior, and can be modeled as agents exhibiting {\em present bias}. A resulting phenomenon is abandonment, which is when an agent initially pursues a task, but ultimately gives up before reaping the rewards.
    
    With the introduction of the graph-theoretic {\em time-inconsistent planning model} due to Kleinberg and Oren~\cite{Kleinberg2018timeinconsistent}, it has been possible to investigate the computational complexity of how a task designer best can support a present-biased agent in completing the task. In this paper, we study the complexity of finding a {\em choice reduction} for the agent; that is, how to remove edges and vertices from the task graph such that a present-biased agent will remain motivated to reach his target even for a limited reward. While this problem is NP-complete in general~\cite{tang2017computational,albers2019motivating}, this is not necessarily true for instances which occur in practice, or for solutions which are of interest to task designers. For instance, a task designer may desire to find the best task graph which is not too complicated.
    
    We therefore investigate the problem of finding {\em simple} motivating subgraphs. These are structures where the agent will modify his plan at most $k$ times along the way. We quantify this simplicity in the time-inconsistency model as a structural parameter: The number of branching vertices (vertices with out-degree at least 2) in a minimal motivating subgraph.
    
    Our results are as follows: We give a linear algorithm for finding an optimal motivating path, i.\,e.~when $k=0$. On the negative side, we show that finding a simple motivating subgraph is NP-complete even if we allow only a single branching vertex --- revealing that simple motivating subgraphs are indeed hard to find. However, we give a pseudo-polynomial algorithm for the case when $k$ is fixed and edge weights are rationals, which might be a reasonable assumption in practice.
\end{abstract}

\keywords{time-inconsistent planning \and motivating subgraph \and abandonment \and choice reduction \and present bias \and time-inconsistent behaviour \and graph theory \and parameterized complexity \and algorithms }



    
\section{Introduction}
\label{sec:intro}

Time-inconsistent behavior is a theme attracting great attention in behavioral economics and psychology. The field investigates questions such as why people let their bills go to debt collection, or buy gym memberships without actually using them. More generally, inconsistent behavior over time occurs when an agents makes a multi-phase plan, but does not follow through on his initial intentions despite circumstances remaining essentially unchanged. Resulting phenomenons include procrastination and abandonment.

A common explanation for time-inconsistent behavior is
the notion of \emph{present bias}, which states that agents give undue salience to events that are close in time and/or space. This idea was described mathematically already in 1937 when Paul Samuelson \cite{Samuelson1937} introduced the discounted-utility model, which has since been refined in different versions~\cite{Laibson1994}.
%
%
%
%
%

%
George Akerlof describes in his 1991 lecture  \cite{akerlof1991procrastination} an even simpler mathematical model; here, the agent simply has a {\em salience factor} causing immediate events to be emphasized more than future events. He goes on to show how even a very small salience factor in combination with many repeated decisions can lead to arbitrary large extra costs for the agent. This salience factor also has support from psychology, where McClure et.~al showed by using brain imaging that separate neural systems are in play when humans value immediate and delayed rewards~\cite{McClure2004}.
\footnote{Note that quasi-hyperbolic discounting (discussed in~\cite{Laibson1994,McClure2004}) can be seen as a generalization of both Samuelson's discounted-continuity model~\cite{Samuelson1937} and Akerlof's salience factor~\cite{akerlof1991procrastination}. There has been some empirical support for this model; however there are also many known psychological phenomena about time-inconsistent behavior it does not capture~\cite{Frederick2002}.}
%



In 2014, Kleinberg and Oren~\cite{Kleinberg2018timeinconsistent} introduced a graph-theoretic model which elegantly captures the salience factor and scenarios of Akerlof. In this framework, where the agent is maneuvering through a weighted directed acyclic graph, it is possible to model many interesting situations. We will provide an example here.
\subsection{Example}%
\begin{figure}
    \centering
    \includegraphics[]{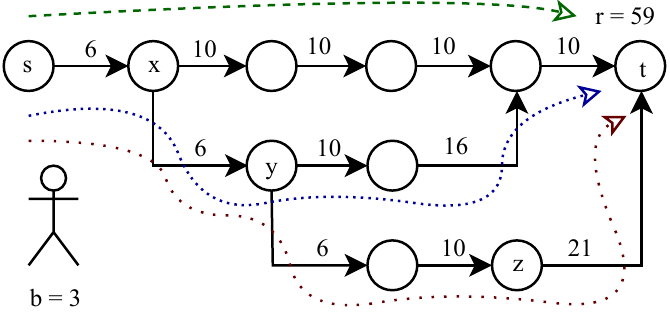}
    \caption{Acyclic digraph illustrating the ways in which Bob can distribute his efforts in order to complete the course. The upper path (green dashed line) is Bob's initial plan requiring the least total effort. The middle path (blue, narrowly dotted line) is the plan which appears better when at vertex $x$, and lower path (red, widely dotted line) is the plan he ultimately changes to at vertex $y$.} \label{fig:example1}
\end{figure}%
The student Bob is planning his studies. He considers taking a week-long course whose passing grade is a reward he quantifies to be worth $r = 59$. And indeed, Bob discovers that he can actually complete the course incurring costs and effort he quantifies to be only $46$ --- if he works evenly throughout the week (the upper path in Figure~\ref{fig:example1}). Bob will reevaluate the cost every day, and as long as he perceives the cost to be at most equal to the reward, he will follow the path he finds to have the lowest cost.

The first day of studies incurs a cost of $6$ for Bob due to some mandatory tasks he needs to do that day. But because Bob has a salience factor $b = 3$, he actually perceives the cost of that day's work to be $18$, and of the course as a whole to be $58$ ($18 + 10 + 10 + 10 + 10$). The reward is even greater, though, so Bob persists to the next day.

When the second day of studies is about to start, Bob quasi-subconsciously makes the incorrect judgment that reducing his studies slightly now is the better strategy. He then changes his plan to the middle path in Figure~\ref{fig:example1}. In terms of our model, the agent Bob standing at vertex $x$ reevaluates the original plan (the upper path) to now cost $3\cdot10 + 10 + 10 + 10 = 60$, whereas the middle path is evaluated to only cost $3 \cdot 6 + 10 + 16 + 10 = 54$. He therefore chooses to go on with the plan that postpones some work to later, incurring a small extra cost to be paid at that time.

On the third day Bob finds himself at vertex $y$, and is yet again faced with a choice. The salience factor, as before, cause him to do less work in the present moment at the expense of more work in the future. He thus changes his plan to the lower path of Figure~\ref{fig:example1}. However, it turns out that the choice was fatal --- on the last day of the course (at vertex $z$), Bob is facing what he perceives to be a mountain of work so tall that it feels unjustified to complete the course; he evaluates the cost to be $3 \cdot 21 = 63$, strictly larger than the reward. He gives up and drops the course.

Because Bob abandons the task in our example above, we say that the graph in Figure~\ref{fig:example1} is not {\em motivating}. A natural question is to ask what we can do in order to make it so.

An easy solution for making a model motivating is to simply increase the reward. By simulating the process, it is also straightforward to calculate the minimum required reward to obtain this. However, it might be costly if we are the ones responsible for purchasing the reward, or even impossible if it is not for us to decide. A more appealing strategy might therefore be to allow the agent to only move around in a subgraph of the whole graph; for instance, if the lower path did not exist in our example above, then the graph would actually be motivating for Bob.\footnote{Removing edges and/or vertices private to the lower path is also the {\em only} option for how to make the example graph motivating; the upper path must be kept in its entirety, otherwise the agent will not be motivated to move from $s$ to $x$; the middle path must also be kept in its entirety, otherwise the agent will give up when at $x$.} Finding such a subgraph is a form of {\em choice reduction}, and can be obtained by introducing a set of rules the agent must follow; for instance deadlines.

The aim of the current paper is not, however, to delve into the details of any particular scenario, but rather to investigate the formal underlying graph-theoretic framework. In this spirit, Kleinberg and Oren~\cite{Kleinberg2018timeinconsistent} show that the structure of a minimal motivating subgraph is actually quite restricted, and ask whether there is an algorithm finding such subgraphs. Unfortunately, Tang et al.~\cite{tang2017computational} and  Albers and Kraft~\cite{albers2019motivating} independently proved that this problem is NP-complete in the general case. However, this does not exclude the existence of polynomial time algorithms for more restricted classes of graphs, or algorithms where the exponential blow-up occurs in parameters which in practical instances are small. This is what we investigate in the current paper; specifically, we look at restricting the number of {\em branching vertices} (vertices with out-degree at least 2) in a minimal motivating subgraph. This parameter can also be understood as the number of times a present-biased agent changes his plan.
%
%

Before we present our results, let us introduce the model more formally.

\subsection{Formal model}

\begin{table}[]
    \centering
    \def\arraystretch{1.3}
    \begin{tabular}{|c|m{11cm}|}
        \hline
        $A_{\varphi}$ & For a set $A$ and a constraint $\varphi:A \to \{\texttt{T}, \texttt{F}\}$, the set {\em $A$ constrained to $\varphi$} denotes the elements of $A$ that satisfy $\varphi$, \ie{}~$\{a \in A \mid \varphi(a) = \texttt{T}\}$. For example, $\mathbb{Z}_{\geq 0}$ indicates the set of all non-negative integers. \\ \hline
        $[x]$         & For $x \in \mathbb{Z}_{\geq 1}$, $[x]$ is the set $\{1,2,\ldots,x\}$. \\ \hline
        $f|_A$        & For a function $f:B \to C$ and a set $A \subseteq B$, the function {\em $f$ restricted to $A$} is a function $f|_A:A \to C$ such that for every $a \in A$, $f|_A(a) = f(a)$. \\ \hline
        $G[A]$        & For a directed graph $G$ and vertex set $A \subseteq V(G)$, the {\em induced subgraph} $G[A]$ is the graph where $V(G[A]) = A$ and $E(G[A]) = E(G) \cap A \times A$. \\ \hline
    \end{tabular}
    \vspace*{2mm}
    \caption{Summary of notation.}
    \label{tab:notation}
\end{table}

We here present the model due to Kleinberg and Oren~\cite{Kleinberg2018timeinconsistent}. Formally, an instance of the {\em time-inconsistent planning model} is a 6-tuple $M = (G, w, s, t, r, b)$ where:
\begin{itemize}
    \item $G = (V(G), E(G))$ is a directed acyclic graph called a {\em task graph}. $V(G)$ is a set of elements called {\em vertices}, and $E(G) \subseteq V(G)\times V(G)$ is a set of directed {\em edges}. The graph is {\em acyclic}, which means that there exists an ordering of the vertices called a {\em topological order} such that, for each edge, its first endpoint comes strictly before its second endpoint in the ordering.  Informally speaking, vertices represent states of intermediate progress, whereas edges represent possible actions that transitions an agent between states.
    \item $w : E(G) \to \mathbb{R}_{\geq 0}$ is a function assigning non-negative weight to each edge. Informally speaking, this is the cost incurred by performing a certain action.
    \item $s \in V(G)$ is the start vertex.
    \item $t \in V(G)$ is the target vertex.
    \item $r \in \mathbb{R}_{\geq 0}$ is the reward.
    \item $b \in \mathbb{R}_{\geq 1}$ is the agent's salience factor.\footnote{While we in the current paper use $b$ as the salience factor as introduced in~\cite{Kleinberg2018timeinconsistent}, the literature about time-inconsistent planning commonly use the term $\beta = b^{-1}$ instead, which, while slightly more convoluted to work with for our purposes, seamlessly integrates with the quasi-hyperbolic discounting model. An artifact of our definition in the current paper is that the reward is not scaled by $b$ when the agent is one leg away; however, both algorithms and hardness proofs can be adapted to account for this technical difference.}
\end{itemize}

An agent with salience factor $b$ is initially at vertex $s$ and can move in the graph along edges in their designated direction. The agent's task is to reach the target $t$, at which point the agent can pick ut a reward worth $r$. We can usually assume that there is at least one path from $s$ to each vertex, and at least one path from each vertex to $t$, as otherwise these vertices are of no interest to the agent.

When standing at a vertex $u$, the agent evaluates (with a present bias) all possible paths from $u$ to $t$. In particular, a $u$-$t$ path $P \subseteq G$ with edges $e_1, e_2, \ldots,e_p$ is evaluated by the agent standing at $u$ to cost $\zeta_M(P) = b\cdot w(e_1) + \sum^{p}_{i=2}w(e_i)$. We refer to this as the {\em perceived} cost of the path. For a vertex $u$, its perceived cost to the target is the minimum perceived cost of any path to $t$, $\zeta_M(u) = \min \{\zeta_M(P) \mid P \text{ is a }u\text{-}t \text{ path}\}$. If the perceived cost from vertex $u$ to the target is strictly larger than the reward, $\zeta_M(u) > r$, then an agent standing there {\em abandons} the task. Otherwise, he will (non-deterministically) pick one of the paths which minimize perceived cost of reaching $t$, and traverse its first edge. This repeats until the agent either reach $t$ or abandons the task.

If every possible route chosen by the agent will lead him to $t$, then we say that the model instance is {\em motivating}. If the model instance is clear from the context, we take that the {\em graph} is motivating to mean the same thing, and we may drop the subscript $_M$ in the notation.


\begin{definition}[Motivating subgraph]
If $G'$ is a subgraph of $G$ belonging to a time-inconsistent planning model $M = (G, w, s, t, r, b)$, then we call $G'$ a {\em motivating subgraph} if $G'$ contains $s$ and $t$ and $M' = (G', w|_{E(G')}, s, t, r, b)$ is motivating%
.
\end{definition}

In the current paper, we investigate the problem of finding a simple motivating subgraph. In order to quantify what we mean by {\em simple}, we first provide the definition of a {\em branching vertex}:

\begin{definition}[Branching vertex]
    The {\em out-degree} of a vertex $v$ in a directed graph $G$ is the number of edges in $G$ that have $v$ as its first endpoint. We  say that $v$ is a \emph{branching vertex}, if its out-degree is at least two.  
\end{definition}
 
In its most general form, we will investigate the following problem:

 \defsimpleproblem{\motSGbranch{}}{A time-inconsistent planning model $M = (G, w, s, t, r, b)$, and a non-negative integer $k \in \mathbb{Z}_{\geq 0}$.} {Does there exist a motivating subgraph  $G' \subseteq G$ with at most $k$ branching vertices?}

We observe that if $b = 1$, then the problem is merely a question of finding the shortest path in a graph, which can be done in linear time by dynamic programming on the topological order. When $b$ approaches infinity, then the problem becomes equivalent to finding the min-max path from $s$ to $t$, which can be solved similarly. Furthermore, when $r$ is $0$, the problem boils down to reachability in the graph where only $0$-weight edges are kept. We will henceforth assume that $b$ is a constant strictly greater than $1$, and that $r$ is strictly positive.
 
 \subsection{Previous work.}%
 Time-inconsistent behavior is a field with a long history in behavioral economics, see~\cite{akerlof1991procrastination,Frederick2002,o1999doing}. Kleinberg and Oren~\cite{Kleinberg2018timeinconsistent} introduce the graph-theoretic model used in this paper, give structural results concerning how much extra cost the salience factor can incur for an agent, and how many values the salience factor can take which will lead the agent to follow distinct paths. They also raise the issue of motivating subgraphs, and give a useful characterization of the minimal among them which we later will see.
 %

Tang et al.~\cite{tang2017computational} refine the structural results concerning extra costs caused by present bias. Furthermore, they show that finding motivating subgraphs is NP-complete in the general case by a reduction from $\textsc{3-Sat}$. They also show hardness for a few variations of the problem where intermediate rewards can be placed on vertices, and give a 2-approximation for these versions.

Albers and Kraft~\cite{albers2019motivating} independently show that finding motivating subgraphs is NP-complete in the general case by a reduction from \kLinkage{} problem in acyclic digraphs. Furthermore, they show that the approximation version of the problem (finding the smallest $r$ such that a motivating subgraph exists) cannot be approximated in polynomial time to a ratio of $\sqrt{n}/3$ unless P = NP; but a $1 + \sqrt{n}$ -approximation algorithm exists. They also explore another variation of the problem with intermediate rewards, which they show to be NP-complete.

There have been work on variations on the model and problem where the designer is free to raise edge costs~\cite{albers2017penalties}, where the agents are more sophisticated~\cite{kleinberg2016planning}, exhibit multiple biases simultaneously~\cite{kleinberg2017planning}, or where the salience factor varies~\cite{albers2017price,gravin2016procrastination}.
 
\subsection{Our contribution}
 
We prove two main results about  the complexity of \motSGbranch. In short, our results can be summarized as follows. While solvable in linear time when $k = 0$, \motSGbranch{} is NP-hard already for $k=1$. However, the hardness reduction strongly exploits constructions with exponentially large (or small) edge weights. In a more realistic scenario, when the costs are bounded by some polynomial of the number of tasks, the problem is solvable in polynomial time for every fixed $k$.

More precisely, our first result is the following dichotomy theorem. 
\begin{theorem}\label{theoremNPdich}\motSGbranch{} 
  is solvable in polynomial time for $k=0$, and is NP-complete for any $k\geq 1$. 
 \end{theorem}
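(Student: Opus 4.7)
The theorem has two halves, which I would prove separately. NP membership for the hard half is immediate: a motivating subgraph with a bounded number of branching vertices is itself a succinct certificate, whose motivation can be verified in polynomial time by directly simulating the agent on the DAG.

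For $k = 0$, a motivating subgraph with no branching vertex is (up to irrelevant unreachable vertices) a path from $s$ to $t$ on which the agent never abandons. I would compute, in reverse topological order, the value $T(v) = \min\{w(v,u) + T(u) \mid (v,u)\in E(G),\ b\cdot w(v,u) + T(u) \le r\}$, with $T(t)=0$; here $T(v)$ is the minimum total weight of a motivating tail-path from $v$ to $t$, and the side condition on the $\min$ encodes that the agent at $v$, with present bias on the first edge, will not abandon when moving to $u$. Inductively, every tail-path attaining $T(v)$ is motivating, so the answer is YES iff $T(s)$ is defined; backtracking reconstructs the path, and the total running time is linear.

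For $k \ge 1$, I would reduce from \subSum{}. The structural fact I would exploit is that a motivating subgraph with even a single branching vertex $v$ can rely on a ``decoy'' outgoing branch: a path whose total unbiased weight is small enough to keep the agent motivated on the segment between $s$ and $v$, but whose first edge out of $v$ is heavy enough that, once the agent actually arrives at $v$, the present bias makes the other branch look strictly cheaper. Because of this time-inconsistency, the decoy is physically present in the subgraph yet never traversed, and removing it would destroy motivation upstream---so a $1$-branching motivating subgraph genuinely cannot be collapsed to a $0$-branching one. Given a \subSum{} instance $(a_1,\ldots,a_n,T)$, I construct a graph with: (i) a setup path from $s$ to $v$; (ii) a ``work'' branch from $v$ to $t$ built from $n$ diamond gadgets, the $i$-th one offering two parallel arcs of weights $0$ and $a_i$, followed by padding, so that any subgraph with at most one branching vertex must prune each diamond to a single arc and thereby select a subset $S \subseteq [n]$; (iii) a decoy branch from $v$ to $t$ with tuned weights. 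The weights are calibrated so that $G'$ is motivating iff $\sum_{i\in S}a_i = T$. To handle arbitrary fixed $k\ge 1$, I would attach $k-1$ independent parallel branching gadgets to the graph (say, short diamonds sitting between $s$ and the setup path) whose inclusion neither helps nor hurts motivation but each contributes one mandatory branching vertex, so the $k$-branching version remains equivalent to the $1$-branching subset-sum encoding.

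The main obstacle is the simultaneous weight calibration. Three conditions must hold at once: (a) at every vertex on the setup path, the agent's minimum perceived cost, which must use the decoy, is at most $r$; (b) at $v$, the agent strictly prefers the work branch over the decoy, so the decoy is actually never traversed; (c) on the work branch, the motivation inequality at every intermediate vertex holds while the perceived-cost condition at $v$ collapses to the exact equation $\sum_{i\in S}a_i = T$. Condition (a) forces the decoy's unbiased cost to be small, (b) forces its first edge to be heavy, and (c) must pin down an exact subset sum despite present bias affecting every step; the resulting tension is what compels the construction to use exponentially scaled edge weights, consistent both with the pseudo-polynomial algorithm appearing later in the paper and with the paper's explicit remark on exponential weights. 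Ruling out accidental $0$-branching motivating paths in $G$---which would trivialize the YES direction---is a further delicate verification that constrains the setup path's weights.
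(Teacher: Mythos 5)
Your treatment of the $k=0$ case matches the paper exactly: you run a shortest-path dynamic program in reverse topological order, restricting the minimization to out-edges that satisfy the present-biased motivation inequality, and the paper's Algorithm~\ref{alg:motpath} does precisely this.

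For $k\ge 1$ you correctly pick \subSum{} as the source problem and correctly identify the structural idea that drives the reduction: a single branching vertex with a ``decoy'' arm that is never traversed but is nonetheless indispensable for keeping the agent motivated upstream. However, the construction you sketch has the selection mechanism on the wrong arm, and this is more than cosmetic. In the paper's gadget the traversed arm ($P_a$) is a fixed path with no diamonds, and the subset-sum diamonds sit on the \emph{decoy} arm $P_c$. Because the agent never walks $P_c$, those diamonds only have to satisfy two scalar constraints on $P_c$'s \emph{total} length: it must be short enough that $\zeta(s)\le r$ (else the agent quits at $s$), yet long enough that the biased first edge makes $P_c$ look worse than $P_a$ at $a_0$ (else the agent wanders onto $P_c$ and later abandons at $c_{n+2}$). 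Those two inequalities sandwich $\textsc{Dist}_{P_c}(c_1,c_{n+1})$ in a window of width $\epsilon$, and picking $\epsilon < w(a_1a_2)/W$ forces the selected subset to sum to exactly $W$. In your version the diamonds sit on the \emph{work} arm, which the agent does walk; you then need motivation to hold at every diamond vertex, and, crucially, you have no clean source of a \emph{lower} bound on the selected sum --- a lighter work arm only makes every motivation inequality slacker, so nothing in your sketch rules out undershooting the target. You acknowledge that ``the main obstacle is the simultaneous weight calibration'' and do not carry it out; with the diamonds on the traversed arm that calibration is genuinely problematic, not just tedious. A smaller issue: your $k-1$ ``mandatory padding gadgets'' are unnecessary, since the problem asks for \emph{at most} $k$ branching vertices, so NP-hardness for $k=1$ immediately yields NP-hardness for every $k\ge 1$; building gadgets that are simultaneously mandatory and motivation-neutral is a nontrivial extra burden you do not need.
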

 
 The reduction we use to prove NP-completeness of the problem in Theorem~\ref{theoremNPdich} is from \subSum, which is weakly NP-complete. Thus, the numerical parameter $W$ in our reduction --- the sum of edge weights when scaled to integer values --- 
 %
 %
 is exponential in the number of vertices in the graph. 
 A natural question is hence whether  \motSGbranch{} can be solved by a pseudo-polynomial  algorithm, i.\,e.~an algorithm which runs in polynomial time when $W$ 
 is bounded by some polynomial of the size of the graph. 
  Unfortunately, this is highly unlikely. A closer look at the  NP-hardness 
  proof of  Tang et al.~\cite{tang2017computational} of finding a motivating subgraph, reveals that instances created in the reduction from 3-SAT have weights that depend only on the constant $b$. 
  
  On the other hand,  in the reduction of  Tang et al.~\cite{tang2017computational}  the number of branchings in their potential solutions grows linearly with the number of clauses in the 3-SAT instance. This leaves a possibility that  when $W$ is bounded by a polynomial of  the size of the input graph $G$  and $k$ is a constant, then  \motSGbranch{} is solvable in polynomial time. Theorem~\ref{theoremXP} confirms that this is exactly the case. (In this theorem we assume that all edge weights are integers --- but since scaling the reward and all edge weights by a common constant is trivially allowed, it also works for rationals.)

 \begin{theorem}\label{theoremXP}
   \motSGbranch{} 
  is solvable in time $(|V(G)|\cdot W)^{\mathcal{O}(k)}$ whenever all edge weights are integers and $W$ is the sum of all weights.
\end{theorem}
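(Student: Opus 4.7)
The plan is to enumerate all candidate \emph{skeletons} of a motivating subgraph $G' \subseteq G$ with at most $k$ branching vertices, and for each skeleton decide feasibility by a single dynamic program on $G$.

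\textit{Structural preparation.} First we argue that, without loss of generality, every branching vertex of $G'$ has out-degree exactly $2$. Indeed, if a branching vertex $v$ has out-degree $\geq 3$, let $u^*$ be the out-neighbour the agent picks at $v$, and let $u'$ be the out-neighbour minimising the real-weight tail $w(v, u) + L_{G'}(u)$, where $L_{G'}(u)$ denotes the real-weight shortest $u$-to-$t$ distance in $G'$. Because $G$ is acyclic, the function $L_{G'}$ restricted to descendants of $v$ is unaffected by deletion of out-edges of $v$; hence removing all out-edges of $v$ except $(v, u^*)$ and $(v, u')$ preserves $L_{G'}(v)$, the agent's preferred edge at $v$, and every perceived cost upstream of $v$, so the subgraph stays motivating. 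Iterating yields the normal form. Now let $B$ and $M$ be the branching and merge (in-degree $\geq 2$) vertices of $G'$, and define the \emph{skeleton} $H$ to be the directed multigraph obtained by contracting every maximal $G'$-subpath of in- and out-degree-$(1,1)$ internal vertices into a single arc. A double count of $|E(G')|$ by out- and by in-degree, using $|B| \leq k$, $\operatorname{outdeg}_{G'}|_B \equiv 2$, and $\operatorname{indeg}_{G'}|_M \geq 2$, yields $|M| = O(k)$, and therefore $|V(H)|, |E(H)| = O(k)$.

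\textit{Enumeration and feasibility.} We enumerate the skeleton $H$ by choosing an ordered tuple of $O(k)$ vertices from $V(G)$ together with $O(k)$ arcs on that tuple and, at each branching vertex, a tag indicating which out-arc plays the role of the agent's edge $u^*$ and which is the ``threat'' edge $u'$; this enumeration produces $|V(G)|^{O(k)}$ options. For each such $H$ we also enumerate, for every arc of $H$, the designated real weight of the corresponding path in $G$ as an integer in $[0, W]$, contributing a further $W^{O(k)}$ options. Given a configuration, we compute the values $L_{G'}(v)$ for $v \in V(H)$ in reverse topological order of $H$, and directly verify the agent-preference inequality and the perceived-cost inequality $\zeta_M(v) \leq r$ at each $v \in V(H)$. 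It then remains to check that the arcs of $H$ admit a simultaneous realisation by internally vertex-disjoint weighted paths in $G$ of the prescribed weights, satisfying, at every internal vertex $x$ with outgoing edge of weight $w$ on the path realising arc $(\cdot, y)$, the local motivation inequality $b \cdot w + d + L_{G'}(y) \leq r$, where $d$ is the residual real distance from the next vertex of $x$ to $y$. We do so by a single dynamic program on $G$ processed in topological order whose state records, for each of the $O(k)$ arcs of $H$, either a ``not-yet-started/completed'' flag or a pair (current $G$-vertex on the path, real weight accumulated so far); this yields a state space of size $(|V(G)| \cdot W)^{O(k)}$ with polynomial-time transitions.

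The principal obstacle is the final dynamic program: it must simultaneously enforce internal vertex-disjointness of $O(k)$ concurrent paths, exact weight targets, and the local motivation inequality at every intermediate vertex -- effectively augmenting the standard $|V|^{O(k)}$-time $k$-disjoint-paths DP on DAGs with a $W^{O(k)}$ factor in the weight coordinates. Combining this with the $(|V(G)| \cdot W)^{O(k)}$ configurations enumerated above gives the claimed running time of $(|V(G)| \cdot W)^{O(k)}$.
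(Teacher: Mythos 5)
Your high-level strategy is the same as the paper's: argue the skeleton of a minimal motivating subgraph with at most $k$ branchings has $\mathcal{O}(k)$ ``interesting'' vertices (branchings and merges), enumerate skeletons together with per-arc weight targets in $(|V(G)|\cdot W)^{\mathcal{O}(k)}$ ways, and decide realisability of each skeleton by a Fortune--Hopcroft--Wyllie-style disjoint-paths DP that additionally tracks accumulated weight. Your double-count bounding $|M|\leq|B|$ is precisely the paper's Observation~\ref{obs:mergings_lek}, and your normal-form argument plays the role of the cited Proposition~\ref{prop:motsubgraphs_are_sparse} of Kleinberg and Oren, which the paper invokes instead of re-deriving.

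There is, however, a genuine gap in how you impose the motivation constraints. In a minimal motivating subgraph $G'$ the agent traverses a unique $s$-$t$ path $P$, and the other arcs of the skeleton are ``decoy'' paths that the agent plans to take from further back but never actually enters; these decoys may well contain vertices $v$ with perceived cost $\zeta_{G'}(v) > r$. Your feasibility test verifies $\zeta(v)\leq r$ at every $v\in V(H)$ and imposes the local motivation inequality at \emph{every} internal vertex on the path realising \emph{every} arc of $H$. Both checks are too strong when applied to off-$P$ vertices and off-$P$ arcs: they can reject a valid configuration, giving false negatives. The paper avoids this by making the auxiliary linkage problem carry a per-arc salience factor and per-arc reward, and then setting salience $1$ and reward equal to the target length for arcs off $P'$, so that the motivation constraint for those arcs is vacuous. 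You need an analogous mechanism; as written, the algorithm is not correct.

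A second, smaller imprecision: the agent-preference inequality at a branching $v$ reads $b\cdot w(vu^\star)+\textsc{Dist}(u^\star,t) < b\cdot w(vu^\diamond)+\textsc{Dist}(u^\diamond,t)$ and hence needs the weights of the two actual first edges leaving $v$. Your skeleton vertex set is $B\cup M\cup\{s,t\}$ and your arcs carry only a total path weight, so the first-edge weight is lost in the contraction. Your ``tag'' at a branching indicates which out-arc is the agent's and which is the threat, but it does not record the first-edge weights that the inequality needs. The paper resolves this by including the next-step vertices $B^\star$ and the shortcut vertices $B^\diamond$ among the guessed vertices, so that $u_iu^\star_i$ and $u_iu^\diamond_i$ are concrete edges of $G$ with known weights. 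Your proposal can be repaired in the same way at no asymptotic cost, but as stated it does not explain how the preference test is carried out.
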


Theorem~\ref{theoremXP}  naturally leads to another question, whether \motSGbranch{}
is solvable in time  $(|V(G)|\cdot W)^{\mathcal{O}(1)}\cdot f(k)$ for some function $f$ of $k$ only. Or in other words, whether the problem is fixed-parameter tractable parameterized by $k$ when  $W$ is encoded unary?    We observe that Albers and Kraft's~\cite{albers2019motivating} hardness proof, reducing from the \kLinkage{} problem in acyclic digraphs, combined with the hardness result of Slivkins \cite{Slivkins10}, implies the following theorem. 


 \begin{theorem}\label{theoremW1}
   Unless  $\operatorname{FPT}$  $=$ $\operatorname{W[1]}$, there is no algorithm solving \motSGbranch{} 
  in time $(|V(G)|\cdot W)^{\mathcal{O}(1)}\cdot f(k)$ for any function $f$ of $k$ only. 
\end{theorem}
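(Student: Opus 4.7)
The plan is to carry out a composition argument that cascades two prior hardness results. Slivkins \cite{Slivkins10} proved that \kLinkage{} on directed acyclic graphs (given $\ell$ source-target pairs $(s_i,t_i)$, find $\ell$ pairwise vertex-disjoint $s_i$-$t_i$ paths) is W[1]-hard when parameterized by $\ell$: no algorithm can solve it in time $|V|^{\mathcal{O}(1)}\cdot f(\ell)$ unless FPT $=$ W[1]. Albers and Kraft \cite{albers2019motivating} already give a polynomial-time reduction from $\ell$-\textsc{Linkage} in DAGs to the problem of deciding the existence of a motivating subgraph. My aim is to re-interpret this reduction as a parameter-preserving reduction from the W[1]-hard parameterization of $\ell$-\textsc{Linkage} to \motSGbranch{} with weights encoded in unary.

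Concretely, I would revisit Albers and Kraft's gadget construction and verify two quantitative properties. First, the edge weights produced by the gadgets depend only on the (constant) salience factor $b$ and on the size of the source linkage instance, so the sum of all weights $W$ of the produced \motSGbranch{} instance is bounded by a polynomial in $n=|V(G)|$; consequently $(|V(G)|\cdot W)^{\mathcal{O}(1)}$ stays polynomial in the size of the original \kLinkage{} input. Second, I would check that any motivating subgraph of the produced instance essentially encodes an $\ell$-linkage through the gadgets, and that the number of branching vertices $k$ in such a subgraph is bounded by a function of $\ell$ only, say $k\le g(\ell)$ for some explicit (e.g.\ linear) $g$. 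The natural route here is to factor the reduction through \exactMotKlink{}, where the prescribed linkage structure directly caps the branching count.

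Given these two properties, the desired hardness is immediate: any hypothetical algorithm for \motSGbranch{} running in time $(|V(G)|\cdot W)^{\mathcal{O}(1)}\cdot f(k)$ would, on instances produced by the reduction, run in time $\operatorname{poly}(n)\cdot f(g(\ell))$, yielding an FPT algorithm for \kLinkage{} parameterized by $\ell$ and thus collapsing W[1] into FPT, a contradiction.

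The main obstacle I expect is the second property. Albers and Kraft's original reduction was designed only to preserve the \emph{existence} of a motivating subgraph, with no a priori control on its structural simplicity. It may therefore be necessary either to tighten the gadgets, or to argue structurally using Kleinberg and Oren's characterization of minimal motivating subgraphs, to show that every motivating subgraph of the constructed instance must use at most $\mathcal{O}(\ell)$ branching vertices. Once this structural bound on $k$ is secured, the weight bound on $W$ is a routine check of the gadget weights, and the composition with Slivkins's W[1]-hardness completes the proof.
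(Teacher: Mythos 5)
Your approach matches the paper's own argument: compose Slivkins's W[1]-hardness of \kLinkage{} (parameterized by $\ell$) with the Albers--Kraft reduction to the motivating-subgraph problem, observing that the reduction keeps the edge-weight sum $W$ polynomially bounded and forces the number of branching vertices in any minimal motivating subgraph to be a function of $\ell$ (in fact exactly $\ell$). The caution you flag about the branching bound is legitimate in the sense that it needs checking, but the paper simply cites this as a known structural feature of Albers and Kraft's gadgets and does not rederive it; your suggested detour through \exactMotKlink{} is unnecessary --- the branching count is read off directly from the gadget structure.
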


%

%
%
%
%

  
      
\section{A dichotomy (proof of Theorem~\ref{theoremNPdich})}
\label{sec:paths}

In this section we prove Theorem~\ref{theoremNPdich}. Recall that the theorem states that 
\motSGbranch{} 
  is solvable in polynomial time for $k=0$, and is NP-complete for any $k\geq 1$. 
We split the proof into two subsections. The first subsection contains a polynomial time algorithm solving \motSGbranch{} for $k=0$ (Lemma~\ref{lem:motPath}). The second subsection proves that the problem is NP-complete for every  $k\geq 1$ (Lemma~\ref{lem:motNP}). 

\subsection{\motPath}
Any connected graph without branching vertices is a path. Thus, we refer to the variant of \motSGbranch{} with  $k=0$ as to the \motPath{} problem. We solve this with an algorithm that is very similar to the classical linear time algorithm computing a shortest path in a DAG.

%
%

We hereby present Algorithm~\ref{alg:motpath}, which solves the \motPath{} problem in $\mathcal{O}(|V(G)| + |E(G)|)$ time. In fact, our algorithm will solve the problem of finding the minimum length such path, if one exists.

 

\begin{algorithm}
    \caption{\motPath{}}\label{alg:motpath}
    \KwIn{An instance of \motPath}
    \KwOut{The length of a shortest motivating $s$-$t$ path witnessing a yes-instance; or $\infty$ if no motivating path exists.}
 
    For each vertex $u \in V(G)$, let $d_u \gets \infty$\;
    For the target $t$, let $d_t \gets 0$\;
    \For{each vertex $u \in V(G)$ in reverse topological order}{
        \For{each out-neighbour $v \in N(u)$}{
            \If{$b\cdot w(uv) + d_v \leq r$ \textnormal{\textbf{and}} $w(uv) + d_v < d_u$}{
                $d_u \gets w(uv) + d_v$\;
            }
        }
    }
    \Return $d_s$
\end{algorithm}

\begin{lemma}\label{lem:motPath}
    The \motPath{} problem can be solved in linear time.
\end{lemma}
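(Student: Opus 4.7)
The plan is to prove both correctness and the linear running time of Algorithm~\ref{alg:motpath}. I would first compute a topological order of $G$ in linear time (standard), then argue that after the main loop finishes, the value $d_u$ stored at each vertex $u$ equals the length of a shortest motivating $u$-$t$ path in $G$, or $\infty$ if no such path exists. The output $d_s$ is then exactly what we want, and we can distinguish a yes-instance (finite $d_s$) from a no-instance ($d_s = \infty$). Since each vertex is scanned once and each edge examined once, the total running time is $\mathcal{O}(|V(G)| + |E(G)|)$.

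The correctness proof rests on the following decomposition property, which I would state and prove first. Let $P = (u_0, u_1, \ldots, u_p)$ be a directed path with $u_p = t$. Then $P$ is motivating (from $u_0$) if and only if for every $i \in \{0, 1, \ldots, p-1\}$, the suffix $P_i = (u_i, u_{i+1}, \ldots, u_p)$ satisfies $\zeta(P_i) = b \cdot w(u_i u_{i+1}) + \sum_{j=i+2}^{p} w(u_{j-1} u_j) \leq r$. Equivalently, every suffix of $P$ is itself a motivating path from its starting vertex. This holds because in a path (no branching), the agent standing at $u_i$ has exactly one option, so he does not abandon iff the perceived cost of the unique remaining path is at most $r$.

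Now I would prove by reverse topological induction that $d_u$ equals $\mu(u) := \min\{|P| : P \text{ is a motivating } u\text{-}t \text{ path}\}$, with $\mu(u) = \infty$ if no such path exists. The base case $u = t$ is immediate. For the inductive step, the decomposition property gives that the motivating $u$-$t$ paths are exactly those obtained by choosing an out-edge $uv$ and a motivating $v$-$t$ path $P_v$ satisfying the additional inequality $b\cdot w(uv) + |P_v| \leq r$. The total length is $w(uv) + |P_v|$, which is monotone in $|P_v|$; moreover, the feasibility constraint $b\cdot w(uv)+|P_v|\leq r$ is also monotone in $|P_v|$. Hence among all motivating $v$-$t$ paths, the shortest one (of length $\mu(v) = d_v$ by induction) simultaneously minimizes the total length and is the easiest to keep feasible. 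Taking the best out-neighbor $v$ yields exactly the update rule of the algorithm.

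The only mildly delicate step is the monotonicity argument in the induction: one must note that if the shortest motivating $v$-$t$ path fails the feasibility test $b \cdot w(uv) + d_v \leq r$, then no longer motivating $v$-$t$ path can salvage it either, so there is no benefit in enumerating suboptimal $P_v$. Everything else — correctness of the reverse-topological traversal (guaranteeing $d_v$ is finalized before $u$ is processed), handling of $\infty$, and the $\mathcal{O}(|V(G)| + |E(G)|)$ time bound — is routine. This establishes Lemma~\ref{lem:motPath}.
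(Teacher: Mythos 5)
Your proposal is correct and follows essentially the same approach as the paper's proof: maintain the invariant that $d_u$ is the length of a shortest motivating $u$-$t$ path and prove it by reverse-topological induction. Your write-up is in fact more careful than the paper's, since you explicitly state the suffix-decomposition characterization of motivating paths and spell out the monotonicity argument (minimizing $|P_v|$ both minimizes total length and is the most permissive choice for the feasibility constraint $b\cdot w(uv)+|P_v|\leq r$), which the paper leaves implicit.
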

\begin{proof}
    We prove that Algorithm~\ref{alg:motpath} is correct. We assume that every vertex has a path to $t$ in $G$ (otherwise we can simply remove it), hence $t$ will come last in the topological order. For every $u \in V(G)$, we claim that $d_u$ holds the minimum length of a motivating path from $u$ to $t$. We observe that our base, $u = t$, is correct since $d_t = 0$.
    
    Consider some vertex $u$. Because the vertices are visited in reverse topological order, all out-neighbors of $u$ are already processed, and hold by the induction hypothesis a correct value. An agent standing at vertex $u$ is motivated to move to the next vertex $v$ in a path $P$ if $b \cdot w(uv) + \textsc{dist}_P(v, t) \leq r$. Hence, if the condition holds, prepending a motivating path from $v$ to $t$ with $u$ will also yield a motivating path. By choosing the shortest total length among all feasible candidates for the next step, the final value is in accordance with our claim.
    
    Finally, we observe that the runtime is correct. Assuming an adjacency list representation of the graph, a topological sort can be done in linear time, and the algorithm process each vertex once and touches each edge once.
\end{proof}

We remark that the graph produced by Algorithm~\ref{alg:motpath} is a $(1 + \sqrt{n})$ -approximation to the general motivating subgraph problem. This follows because the approximation algorithm of Albers and Kraft~\cite{albers2019motivating} with the stated approximation ratio always produce a path; Algorithm~\ref{alg:motpath} will on the other hand find the optimal path, and is hence at least as good.


\subsection{Hardness of allowing branches}
\label{sec:hardbranchings}

Deciding whether there exists a \emph{path} which will motivate a biased agent to reach the target turns out to be easy, but we already know by the result of Tang et al.~\cite{tang2017computational} and Albers and Kraft~\cite{albers2019motivating} that finding a motivating \emph{subgraph} in general is NP-hard. In both  reductions of Tang et al.  and  of Albers and Kraft, the feasible solutions to the reduced instance have a ``complicated'' structure in the sense that they contain many branching vertices. A natural question is whether it could be easier to find motivating subgraphs whose structure is simpler, as in the case of paths.

A first question might be whether \motSGbranch{} is {\em fixed parameter tractable} (FPT) parameterized by $k$, the number of branching vertices. Unfortunately, this is already ruled out by the reduction of Albers and Kraft, since they reduce from the W[1]-hard \kLinkage{} problem for acyclic digraphs --- the number of branchings in their feasible solutions is linear in $\ell$.

As the \kLinkage{} problem can be solved in time $n^{f(\ell)}$ for acyclic digraphs~\cite{bang2008digraphs}, their reduction does not rule out an XP-algorithm. In this section we show by a reduction from \subSum{} that \motSGbranch{} is actually NP-hard even for $k = 1$.

%
%
%

\begin{lemma}\label{lem:motNP}
     \motSGbranch{}   is  NP-complete for $k \geq 1$.
\end{lemma}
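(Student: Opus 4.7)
The plan is first to verify NP-membership and then prove NP-hardness by reducing from \subSum{}, as foreshadowed in the introduction. Membership follows because for any candidate subgraph $G'\subseteq G$ one can in polynomial time both count its branching vertices and decide whether it is motivating: the latter is just a reverse-topological sweep computing $\zeta_{M'}(u)$ for every vertex $u$ and verifying that the abandonment condition $\zeta_{M'}(u)>r$ never fires along any realizable trajectory (equivalently, at every vertex reachable by some minimum-perceived-cost choice).

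For hardness I will reduce from \subSum{} to the $k=1$ version. Given an instance with items $a_1,\dots,a_n$ and target $T$, I construct a task graph whose only ``real'' branching vertex is a single node $x$, with two outgoing branches. The first, the \emph{item branch}, is a chain $x=u_0\to u_1\to\dots\to u_n\to t$, where between each consecutive pair $u_{i-1},u_i$ the original graph offers two parallel short sub-paths: an ``exclude'' sub-path of weight $0$ and an ``include'' sub-path of weight $a_i$ (plus small $b$-dependent calibrating offsets). The second, the \emph{comparator branch}, is a fixed path from $x$ to $t$ whose total weight and perceived cost from $x$ are chosen so that parity with the item branch at $x$ holds precisely when the items retained on the item branch sum to $T$. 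A preamble path from $s$ to $x$ and the reward $r$ are fixed so that the agent is forced to stand at $x$ and both branches are exactly on the verge of motivating.

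Correctness then has two directions. If some subset $S$ sums to $T$, the subgraph that keeps the ``include'' parallel at gadget $i$ exactly when $i\in S$, keeps both outgoing edges of $x$, and keeps the preamble and comparator, has exactly one branching vertex (namely $x$) and is motivating by the calibration. Conversely, in any motivating subgraph with at most one branching vertex, every item gadget must contribute exactly one parallel sub-path (else a second branching vertex appears), so the kept weight on the item branch equals $\sum_{i\in S}a_i$; for the agent at $x$ to pursue and complete either branch, the two perceived costs at $x$ must coincide, which by construction forces $\sum_{i\in S}a_i=T$.

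I expect the main obstacle to be weight calibration: the motivating inequality must hold with slack at every intermediate vertex of both branches (not only at $x$), while the equality that forces the subset-sum condition must be tight at $x$; because the present-bias factor $b$ scales only the first edge of each locally evaluated path, the values $\zeta(u_i)$ must decrease along the item branch at a precisely engineered rate regardless of which parallel sub-path is chosen in earlier gadgets. To handle this I would insert small $b$-dependent padding edges inside each item gadget so that the cumulative perceived cost at every $u_i$ stays well below $r$ and the branching constraint only bites at $x$. Finally, to obtain hardness for every fixed $k\geq 1$, I would concatenate $k-1$ ``dummy'' branching gadgets in series after the core construction, each forcing a unique branching vertex independent of the subset-sum core, so that a motivating subgraph with at most $k$ branching vertices exists iff the \subSum{} instance is a yes-instance.
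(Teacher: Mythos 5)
Your high-level plan (reduce from \subSum{} with a single branching vertex, an ``item'' chain of parallel sub-paths for the subset choice, and a ``comparator'' path) is the same general shape as the paper's reduction, but the mechanism you describe for soundness does not work.

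The core error is the claim that ``for the agent at $x$ to pursue and complete either branch, the two perceived costs at $x$ must coincide.'' Standing at the branching vertex, the agent simply takes a branch of minimum perceived cost; no equality is required, and if one branch is strictly cheaper the other need satisfy nothing at all. In the paper's construction the two quantities are deliberately \emph{not} equal at the branching vertex $a_0$: the comparator ($a$-path) is strictly cheaper there, by exactly $\epsilon$, while the lure ($c$-path) is strictly cheaper one step earlier at $s$. What pins down the subset-sum value is a two-sided \emph{inequality} window --- the lure must be short enough to motivate departure from $s$, yet long enough to be discarded at $a_0$ --- and shrinking $\epsilon$ pinches that window to the unique value $T$. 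Your equality-based argument neither arises from the model's dynamics nor recovers this window.

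A second, independent gap: nothing in your construction forces a branching vertex to exist at all. In the paper's instance the lure ends in a dead-end vertex $c_{n+2}$ at which the perceived cost $b\cdot w(c_{n+2}t) = 1+b\epsilon$ strictly exceeds $r$, so no subgraph that actually routes the agent into the lure can be motivating; meanwhile $P_a$ alone has perceived cost $1+\epsilon>r$ at $s$. This pair of facts is precisely what makes a branching mandatory and what makes the lure act only as a temporary plan the agent later abandons. Your item branch has no such trap, so if the retained parallels yield a motivating $x$-$t$ chain, that chain together with the preamble is already a motivating \emph{path} with zero branching vertices --- breaking the intended iff. For the same reason the backward direction also fails: a motivating subgraph could simply delete one whole branch at $x$ and evade the constraint you intend to enforce. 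Finally, the ``dummy branching gadgets'' for $k>1$ are unnecessary once this is fixed: the backward direction of the correct argument holds for \emph{any} motivating subgraph, so the same instance certifies hardness for every $k\geq 1$.
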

\begin{proof} Deciding whether a given graph is motivating  can be done in polynomial time by simply checking whether it is possible for the agent to reach a vertex where the perceived cost is greater than the reward. This implies the membership of  \motSGbranch{} in NP.

To prove NP-compleness, we reduce from the classical NP-complete problem \subSum{}~\cite{Karp72}. 

\defsimpleproblem{\subSum{}}{A set of non-negative integers $X = \{x_1, x_2, \ldots, x_n\} \subseteq \mathbb{Z}_{\geq 0}$ and a target $W \in \mathbb{Z}_{\geq 0}$.}{Does there exists a subset $X' \subseteq X$ such that its elements sums to $W$?}

The reduction is  described in the form of   Algorithm~\ref{red:subSumToMotSGBranch} and an example is given in Figure~\ref{fig:subSumToMotSGBranch}. The soundness of the reduction is proved in the following claim.

\begin{algorithm}
    \DontPrintSemicolon
    \caption{Reduction from \subSum{} to \motSGbranch{} ($k=1$)}
    \label{red:subSumToMotSGBranch}
    \KwIn{An instance $I = (X = \{x_1, x_2, \ldots, x_n\}, W)$ of \subSum{}; and any salience factor $b \in \mathbb{R}_{> 1}$.}
    \KwOut{An instance $I' = (G, w, s, t, b, r, k)$ of \motSGbranch{} with $k=1$ and salience factor $b$.}
    $V(G) \gets$ $\{s, a_0, a_1, a_2, a_3, t\} \cup \{c_1, c_1^*, c_2, c_2^*, \ldots, c_{n}, c_{n}^*\} \cup \{c_{n+1}, c_{n+2}\}$\;
    $E(G) \gets \{sa_0, a_0a_1, a_1a_2, a_2a_3, a_3t\} \cup \{c_ic_i^*, c_ic_{i+1}, c_i^*c_{i+1} \mid i \in [n]\} \cup \{a_0c_1,c_{n+1}c_{n+2}, c_{n+2}t\}$\;
    $w \gets $ $\left \{
      \begin{array}{rcl || rcll }
  a_3t   & \to & \frac{1}{b}           & c_{n+2}t       & \to & w(a_3t) + \epsilon & \\
  a_2a_3 & \to & \frac{1-w(a_3t)}{b}   & c_{n+1}c_{n+2} & \to & w(a_2a_3) - 2\epsilon - \frac{2\epsilon}{b-1}&  \\
  a_1a_2 & \to & \frac{1-w(a_2a_3)-w(a_3t)}{b} & c_ic_{i+1}     & \to & \frac{x_i \cdot w(a_1a_2) }{W} \text{~~~} \text{ for } i \in [n] & \\
  a_0a_1 & \to & \frac{1-w(a_1a_2)-w(a_2a_3)-w(a_3t)}{b}   & a_0c_1       & \to & w(a_0a_1) + \frac{2\epsilon}{b-1} & \\
  sa_0   & \to & \frac{1-w(a_0a_1)-w(a_1a_2)-w(a_2a_3)-w(a_3t)}{b} + \frac{\epsilon}{b} & & \to & 0 \text{~~~}  \text{ otherwise} & \\
  \end{array}
\right \}$\;
    \Return $(G, w, s, t, b, r=1, k=1)$
\end{algorithm}

\begin{figure}[h]
    \centering
    \begin{tikzpicture}[->,>=stealth',shorten >=1pt,auto,node distance=1.7cm,
  thick,main node/.style={circle,fill=gray!10,draw,
  font=\sffamily\bfseries,minimum size=7mm}]

  \node[main node] (s) {$s$};
  \node[main node] (a0) [right of=s] {$a_0$};
  \node[main node] (c1) [below right of=a0] {$c_1$};
  \node[main node] (a1) [above right of=a0] {$a_1$};
  \node[main node] (c2) [right of=c1] {$c_2$};
  \node[main node] (c3) [right of=c2] {$c_3$};
  \node[main node] (c4) [right of=c3] {$c_4$};
  \node[main node] (c5) [right of=c4] {$c_5$};
  \node[main node] (c1s) [below left of=c2] {$c_1^*$};
  \node[main node] (c2s) [below left of=c3] {$c_2^*$};
  \node[main node] (c3s) [below left of=c4] {$c_3^*$};
  \node[main node] (t) [above right of=c5] {$t$};
  \node[main node] (a3) [above left of=t] {$a_3$};
  \node[main node] (a2) [left of=a3] {$a_2$};

  \path[every node/.style={font=\sffamily\small,
  		fill=none,inner sep=3pt}]
    (s) edge [line width=1.7pt,bend left=0] node {$\frac{1}{32} + \frac{\epsilon}{2}$} (a0)
    (a0) edge [line width=1.7pt,bend left=20] node {$\frac{1}{16}$} (a1)
    (a1) edge [line width=1.7pt,bend left=0] node {$\frac{1}{8}$} (a2)
    (a2) edge [line width=1.7pt,bend left=0] node {$\frac{1}{4}$} (a3)
    (a3) edge [line width=1.7pt,bend left=20] node {$\frac{1}{2}$} (t)
    
    (a0) edge [line width=1.7pt,bend right=20] node {$\frac{1}{16} + 2\epsilon$} (c1)
    
    (c1) edge [line width=1.7pt, bend left=30] node {$\frac{3}{80}$} (c2)
    (c2) edge [bend left=30] node {$\frac{6}{80}$} (c3)
    (c3) edge [line width=1.7pt, bend left=30] node {$\frac{7}{80}$} (c4)
    (c1) edge [bend right=20] node {$0$} (c1s)
    (c2) edge [line width=1.7pt, bend right=20] node {$0$} (c2s)
    (c3) edge [bend right=20] node {$0$} (c3s)
    (c1s) edge [bend right=20] node {$0$} (c2)
    (c2s) edge [line width=1.7pt, bend right=20] node {$0$} (c3)
    (c3s) edge [bend right=20] node {$0$} (c4)
    
    (c4) edge [line width=1.7pt, bend right=0] node {$\frac{1}{4} - 4\epsilon$} (c5)
    (c5) edge [line width=1.7pt, bend right=20] node {$\frac{1}{2} + \epsilon$} (t)
    ;
\end{tikzpicture}
    \caption{The graph $G$ constructed by Algorithm~\ref{red:subSumToMotSGBranch} on input $X = \{3, 6, 7\}, W=10, b=2$. The solution to \subSum\,  is $X'=\{3,7\}$ and  the corresponding motivating subgraph $G'$  is formed by thick arcs.  In graph $G$ the agent is tempted to pursue the lower path $P_c=[s,a_0,c_1,c_1^*,c_2 ,c_2^*, c_3 ,c_4 ,c_5 ,t]$  but will lose motivation at node $c_5$. In graph $G'$, at node $s$, the agent's plan will be to follow the lower path $P_c$ but at node $a_0$ the agent will switch to the upper path $P_a$. At every step on  this path the agent is motivated to  reach $t$.  }
    \label{fig:subSumToMotSGBranch}
\end{figure}
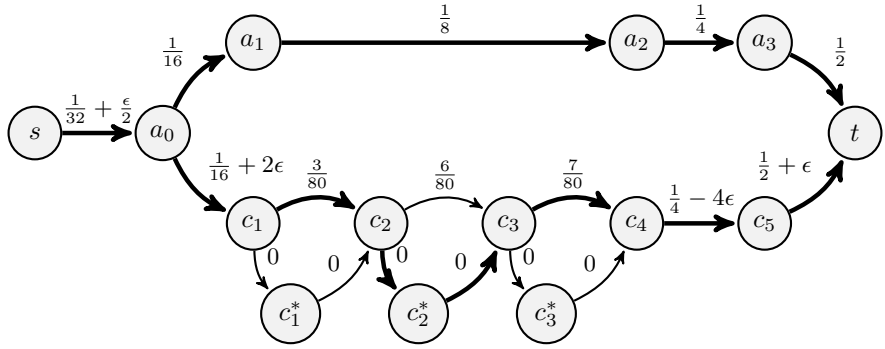

\begin{claim}
    Algorithm~\ref{red:subSumToMotSGBranch} is safe. Given as input an instance $I$ of \subSum{} and a salience factor $b \in \mathbb{R}_{> 0}$, the output instance $I'$ of \motSGbranch{} is a yes-instance if and only if $I$ is a yes-instance.
\end{claim}
\begin{proofclaim}
    Before we begin the proof, we will adapt the following notation: for a directed path $P$ and a vertex $u$ in $P$, we let $P|_u$ denote the path restricted to begin in $u$. In other words, $P|_u$ is the subgraph of $P$ induced on the vertices reachable from $u$ in $P$.
    
    For the forward direction of the proof, assume that $I$ is a yes-instance and let $X' \subseteq X$ be a witness to this. Let $G' \subseteq G$ be the graph where for each $i \in [n]$ the vertex $c_i^*$ and incident edges are removed if $x_i \in X'$, and the edge $c_ic_{i+1}$ is removed if $x_i \notin X'$. We make a series of step-wise observations which shows that $G'$ is motivating:
    \begin{enumerate}
        \item $G'$ contains a single vertex with out-degree $\geq 2$, namely $a_0$. There are two possible paths from $s$ to $t$: The $a$-path $P_a = [s, a_0, a_1, a_2, a_3, t]$; and the $c$-path $P_c = [s, a_0, c_1, (c_1^*), c_2, (c_2^*), \ldots, c_{n},(c_n^*),c_{n+1}, c_{n+2}, t]$ (where for each $i \in [n]$, $P_c$ only include vertex $c_i^*$ when $x_i \notin X'$).
        
        \item \label{enu:apathperceived} In the path $P_a$, the agent will by construction perceive the cost of moving towards $t$ to be $1$, regardless of which vertex he resides on. The only exception to this is when the agent is at $s$; then the perceived cost of following the path $P_a$ is $1 + \epsilon$. In other words, $\zeta(P_a) = 1 + \epsilon$ and for each of $i\in \{0, 1, 2, 3\}$, $\zeta(P_a|_{a_i}) = 1$.
        
        \item \label{enu:distc1cn} The distance from $c_1$ to $c_{n+1}$ in $G'$ is $\sum_{x_i \in X'}\frac{x_i \cdot w(a_1a_2)}{W}$, which simplifies to exactly $w(a_1a_2)$. This holds because $X'$ sums to $W$ by the initial assumption. In other words, $\textsc{Dist}_{P_c}(c_1, c_{n+1}) = w(a_1a_2)$.
        
        \item \label{enu:distCshorter} The distance from $a_0$ to $t$ is $\epsilon$ shorter in the $c$-path compared to the $a$-path, $\textsc{Dist}_{P_c}(a_0, t) = \textsc{Dist}_{P_a}(a_0, t) - \epsilon{}$. This follows from (\ref{enu:distc1cn}) and how the remaining weights in the $c$-path are defined.
        
        \item The perceived distance from $s$ to $t$ in $P_c$ is $1$, $\zeta(P_c) = 1$ (follows by combining (\ref{enu:apathperceived}) and (\ref{enu:distCshorter}), and observing that the two paths share their first leg). The agent is hence motivated to move from $s$ to $a_0$ with a plan of following $P_c$ towards the target.
        
        
        \item The perceived distance from $a_0$ to $t$ in $P_c$ is $\epsilon$ more than the perceived distance from $a_0$ to $t$ in $P_a$. This follows from the following chain of substitutions:
        \begin{align*}
            \zeta(P_c|_{a_0}) =\; &b\cdot w(a_0c_1) + \textsc{Dist}_{P_c}(c_1, t)  \\
            =\; &b\cdot w(a_0a_1) + b\cdot \frac{2\epsilon}{b-1} + \textsc{Dist}_{P_c}(c_1, c_{n+1}) + w(c_{n+1}c_{n+2}) + w(c_{n+2}t)  \\
            =\; &b\cdot w(a_0a_1) + b\cdot \frac{2\epsilon}{b-1} + w(a_1a_2) + w(a_2a_3) - 2\epsilon - \frac{2\epsilon}{b-1} + w(a_3t) + \epsilon  \\
            =\; &b\cdot w(a_0a_1) + w(a_1a_2) + w(a_2a_3) + w(a_3t) + (b-1)\cdot \frac{2\epsilon}{b-1} - \epsilon \\
            =\; &b\cdot w(a_0a_1) + w(a_1a_2) + w(a_2a_3) + w(a_3t) + \epsilon \\
            =\; &b\cdot w(a_0a_1) + \textsc{Dist}_{P_a}(a_1, t) + \epsilon \\
            =\; &\zeta(P_a|_{a_0}) + \epsilon
        \end{align*}
        It follows that an agent standing at $a_0$ is \emph{not} willing to walk along $P_c$, but change his plan to walking along $P_a$ instead. The agent will stay motivated on the $a$-path, and will reach the reward. Hence, the graph $G'$ is motivating.
    \end{enumerate}
    
    For the reverse direction of the proof, assume there is a motivating subgraph $G' \subseteq G$ that motivates the agent to reach $t$. We notice that $G'$ must contain the $a$-path, since any agent moving out of the $a$-path will by the construction of $G'$ need to walk past $c_{n+2}$ --- however, an agent standing at this vertex will by the construction always give up. We also notice that $P_a$ itself is not motivating, hence $G'$ must contain at least one other path from $s$ to $t$. Let $P_c$ denote the \emph{shortest} $s$-$t$ path that is different from $P_a$.
    
    We begin by observing that $P_c$ must start at $s$, include $a_0$ and $c_1,$, a path from $c_i$ to $c_{i+1}$ for every $i \in [n]$, as well as the vertices $c_{n+1}, c_{n+2}$, and $t$.
    
    There are some length requirements that $P_c$ needs to fulfill in order for $G'$ to be motivating. In order to motivate an agent at $s$ to move to $a_0$, the distance from $a_0$ to $t$ in $P_c$ can be at most $\textsc{Dist}_{P_a}(a_0, t) - \epsilon$. This implies: $$\textsc{Dist}_{P_c}(c_1, c_{n+1}) \leq w(a_1a_2)$$
    
    However, $P_c$ can not be so short that it tempts the agent to move off the $a$-path. In particular, an agent standing at $a_0$ must perceive the $a$-path to be strictly shorter than walking along $P_c$. We obtain the following inequality:
    \begin{align*}
        b \cdot w(a_0a_1) + \textsc{Dist}_{P_a}(a_1, t) &< b \cdot w(a_0c_1) + \textsc{Dist}_{P_c}(c_1, t) \\
        b \cdot w(a_0a_1) + \textsc{Dist}_{P_a}(a_1, t) &< b \cdot w(a_0a_1) + b \cdot \frac{2\epsilon}{b-1} + \textsc{Dist}_{P_c}(c_1, t) \\ 
        \textsc{Dist}_{P_a}(a_1, t) &< b \cdot \frac{2\epsilon}{b-1} + \textsc{Dist}_{P_c}(c_1, t) \\ 
        w(a_1a_2) + w(a_2a_3) + w(a_3t) &< b \cdot \frac{2\epsilon}{b-1} + \textsc{Dist}_{P_c}(c_1, c_{n+1}) + w(a_2a_3) - 2\epsilon - \frac{2\epsilon}{b-1} + w(a_3t) + \epsilon \\ 
        w(a_1a_2) &< (b - 1) \cdot \frac{2\epsilon}{b-1} + \textsc{Dist}_{P_c}(c_1, c_{n+1}) - \epsilon \\ 
        w(a_1a_2) - \epsilon &< \textsc{Dist}_{P_c}(c_1, c_{n+1}) \\ 
    \end{align*}
    We now construct a solution $X' \subseteq X$ to \subSum{} by including $x_i$ in $X'$ if $P_c$ use the edge $c_ic_{i+1}$. Notice that the sum of the edge weights for these edges will make up the distance $\textsc{Dist}_{P_c}(c_1, c_{n+1})$. We can lift the bounds for that distance to the sum of elements in $X'$:
    \begin{alignat*}{2}
        w(a_1a_2) - \epsilon &< \textsc{Dist}_{P_c}(c_1, c_{n+1}) &&\leq  w(a_1a_2) \\
        w(a_1a_2) - \epsilon &< \sum_{c_ic_{i+1} \in P_c}w(c_i, c_{i+1}) &&\leq  w(a_1a_2) \\
        w(a_1a_2) - \epsilon &< \sum_{x_i \in X'}\frac{x_i \cdot w(a_1a_2)}{W} &&\leq  w(a_1a_2) \\
        W - \epsilon \cdot \frac{W}{w(a_1a_2)} &< \sum_{x_i \in P_c} x_i &&\leq  W \\ 
    \end{alignat*}
    By choosing $\epsilon$ strictly smaller than $\frac{w(a_1a_2)}{W}$, we guarantee that the set $X'$ has value exactly $W$ (note: this bound on $\epsilon$ is a function of $b$ and $W$). This concludes the soundness proof of the reduction.
\end{proofclaim} 

By the claim,   \motSGbranch{}   is  NP-complete for $k= 1$ and hence is also  NP-complete for every $k\geq 1$. 
\end{proof}

The proof of Theroem~\ref{theoremNPdich} follow from Lemmata~\ref{lem:motPath} and~\ref{lem:motNP}.

  
      
\section{A pseudo-polynomial algorithm (proof of Theorem~\ref{theoremXP})}
\label{sec:psudopolyXP}

The reduction in Section~\ref{sec:hardbranchings} shows that restricting the number of branchings in the motivating structures we look for does not make the task of finding them significantly easier. However, we notice that the weights used in the graph constructed in the reduction can be exponentially small, and depend on $W$ as well as on $b$. On the other hand, in the hardness proof for \motSG{} by Tang et al.~\cite{tang2017computational}, the instances created in the reduction from 3-SAT have weights that depend only on the constant $b$; but the number of branchings in their potential solutions grows linearly with the number of clauses in the 3-SAT instance.

Hence, \motSG{} is hard even if the number of branchings in the solution we are looking for is bounded, and it is also hard if the input instance only use integer weights bounded by a constant. But what if we impose both restrictions simultaneously? In this section we prove that if the input instance has bounded integer weights, then we can quickly determine whether it contains a motivating subgraph with few branchings. 


We will make a use of an auxiliary problem which is a variation of the exact \kLinkage{} problem in acyclic digraphs, but which also impose restrictions on the links --- requiring them to be motivating for biased agents. We define the problem:

\defparproblem{\exactMotKlink{} (\textsc{EMkL})}{An acyclic digraph $G$; edge weights $w : E(G) \to \mathbb{Z}_{\geq 0}$ such that $\sum_{e \in E(G)} w(e) = W$; source terminals $s_1, s_2, \ldots, s_k \in V(G)$; sink terminals $t_1, t_2, \ldots, t_k \in V(G)$; target link weights $\ell_1, \ell_2, \ldots, \ell_k \in Z_{\geq 0}$; salience factors $b_1, b_2, \ldots, b_k \in \mathbb{R}_{\geq 1}$; and rewards $r_1, r_2, \ldots, r_k \in \mathbb{R}_{\geq 0}$.}{$k+W$}{Does there exist (internally) vertex disjoint paths $P_1, P_2, \ldots, P_k$ such that for each $i \in [k]$, $P_i$ starts in $s_i$, ends in $t_i$ and has weight $\ell_i$, and is such that an agent with salience factor $b_i$ will be motivated by a reward $r_i$ to move from $s_i$ to $t_i$?}

We solve the \exactMotKlink{} problem using dynamic programming, inspired by Fortune, Hopcroft and Wyllie's~\cite{FORTUNE1980111} solution to the \kLinkage{} problem in acyclic digraphs (see also~\cite{bang2008digraphs}).

\begin{lemma}
    \exactMotKlink{} can be solved in time $\mathcal{O}(kn^{k+1}W^k)$.
\end{lemma}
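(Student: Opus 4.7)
The plan is to adapt the classical Fortune--Hopcroft--Wyllie dynamic programming for $\ell$-Linkage in DAGs~\cite{FORTUNE1980111,bang2008digraphs}, augmenting the state with the accumulated weights of the partial paths and folding in a motivation check at every edge extension. Because motivation at a vertex $v$ on a path $P_i$ depends on the \emph{remaining} weight of $P_i$ to $t_i$, it is cleanest to grow the paths \emph{backwards} from the sinks toward the sources. Fix a topological ordering $\pi$ of $V(G)$, and define a boolean table $T$ indexed by tuples $(v_1,\ldots,v_k,w_1,\ldots,w_k)\in V(G)^k\times\{0,1,\ldots,W\}^k$ with the intended meaning that $T[v_1,\ldots,v_k,w_1,\ldots,w_k]$ is TRUE iff there exist internally vertex-disjoint paths $Q_1,\ldots,Q_k$ such that each $Q_i$ is a $v_i$--$t_i$ path of weight exactly $w_i$, and each $Q_i$ is motivating for an agent with salience $b_i$ and reward $r_i$. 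The answer to \textsc{EMkL} is then $T[s_1,\ldots,s_k,\ell_1,\ldots,\ell_k]$.

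The base case sets $T[t_1,\ldots,t_k,0,\ldots,0]=\text{TRUE}$. For the recurrence, at a state $(v_1,\ldots,v_k,w_1,\ldots,w_k)$ I pick the \emph{canonical} coordinate $i^\star=\arg\max_i\pi(v_i)$ (the topologically latest front) and declare $T$ TRUE at this state iff there is an in-neighbour $u$ of $v_{i^\star}$ with $u\notin\{v_j:j\neq i^\star\}$ satisfying the one-edge motivation inequality
\[
    b_{i^\star}\cdot w(uv_{i^\star}) + w_{i^\star} \;\leq\; r_{i^\star},
\]
for which $T[v_1,\ldots,u,\ldots,v_k,w_1,\ldots,w_{i^\star}+w(uv_{i^\star}),\ldots,w_k]$ is TRUE. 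I fill $T$ in order of decreasing $\max_i\pi(v_i)$, which is a valid evaluation order since each recursive call strictly decreases that quantity.

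Correctness rests on two standard observations. First, a path is motivating precisely when at every internal vertex the perceived cost of the \emph{suffix} is at most the reward, so checking the single-edge inequality above at each extension step is equivalent to the global motivation requirement. Second, the canonical-coordinate rule enforces internal vertex-disjointness without storing visited sets: since we always advance the coordinate whose front is topologically latest, the topological rank of the $i^\star$-th front strictly decreases, and hence any vertex that was once an $i$-front can never later appear as a $j$-front for $j\neq i$ unless it violated the explicit distinctness constraint $u\notin\{v_j:j\neq i^\star\}$ at the moment the collision would occur. This is exactly the Fortune--Hopcroft--Wyllie argument, which carries over unchanged because the motivation and weight bookkeeping are purely local annotations on the state.

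For the running time, the state space has size $|V(G)|^k(W+1)^k=\mathcal{O}(n^kW^k)$; at each state the canonical coordinate is determined in $\mathcal{O}(k)$ time, each of its at most $n-1$ in-neighbours is examined, and the disjointness plus motivation check costs $\mathcal{O}(k)$. This gives the claimed bound $\mathcal{O}(kn^{k+1}W^k)$. The only delicate point I expect to have to justify carefully is that the canonical-advancement rule is simultaneously (i) complete, i.e.\ every valid motivating linkage is realised by \emph{some} evaluation path of $T$, and (ii) sound for disjointness; both follow by the usual exchange argument on the DAG but should be spelled out, together with the routine remark that the actual paths can be reconstructed by storing a back-pointer at each TRUE entry.
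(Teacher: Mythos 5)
Your high-level plan is the same as the paper's: a Fortune--Hopcroft--Wyllie style dynamic program over $k$-tuples of ``fronts'', augmented with the remaining weight of each partial path and a one-edge motivation check per extension. The table semantics, the base case, the idea of replacing the ``no path from $v_j$ to $v_i$'' reachability condition by a canonical-coordinate rule, and the time analysis all agree with the paper, so I will focus on the one place where what you actually wrote is inconsistent.

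The recurrence as stated is inverted, and this is not merely cosmetic: it makes the stated evaluation order impossible. You declare $T[v_1,\ldots,v_k,w_1,\ldots,w_k]$ TRUE iff there is an in-neighbour $u$ of $v_{i^\star}$ for which $T[v_1,\ldots,u,\ldots,v_k,\ldots,w_{i^\star}+w(uv_{i^\star}),\ldots]$ is TRUE. But under the table semantics you fixed (``$Q_i$ is a $v_i$--$t_i$ path of weight $w_i$''), the right-hand entry is a \emph{stronger}, not weaker, assertion than the left-hand one: a path from $u$ to $t_{i^\star}$ need not pass through $v_{i^\star}$, so its existence does not imply the existence of the path you want. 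Moreover, the entry you reference has strictly smaller $\max_i\pi(v_i)$, so filling the table in decreasing order of $\max_i\pi(v_i)$ (which places the base entry $T[t_1,\ldots,t_k,0,\ldots,0]$ first, as you want) means you would read entries that have not yet been computed. The correct pull form, matching the paper's, is: $T[v_1,\ldots,v_k,w_1,\ldots,w_k]$ is TRUE iff for the canonical $i^\star$ there exists an \emph{out}-neighbour $v'$ of $v_{i^\star}$ with $v'\notin\{v_j : j\neq i^\star\}$, with $b_{i^\star}w(v_{i^\star}v')+\bigl(w_{i^\star}-w(v_{i^\star}v')\bigr)\leq r_{i^\star}$, and $T[v_1,\ldots,v',\ldots,v_k,\ldots,w_{i^\star}-w(v_{i^\star}v'),\ldots]=\texttt{TRUE}$; equivalently, keep your in-neighbour formulation but phrase it as a \emph{push} update (from a TRUE entry to the entry with the longer path). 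Once the direction is fixed, the canonical rule you use (always advance the front that is topologically extremal in the appropriate direction) does enforce disjointness by the exchange argument you sketch, so you do not need the paper's explicit ``no path from $v_j$ to $v_i$'' test; but you should still state the invariant (fronts stay pairwise distinct and the extremal front, once moved past a vertex, can never be revisited by another front) rather than gesture at it, as you yourself anticipate. The paper's proof additionally handles coincident sources/sinks by duplicating vertices and assumes sources have no in-edges and sinks no out-edges; add these routine normalisations so the canonical-coordinate argument is clean.
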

\begin{proof}
    We will in the upcoming proof assume all sources and sinks are distinct. If they are not, simply make multiple copies of each such vertex, one for each extra occurrence as a source or sink beyond the first one.

    We solve the problem by dynamic programming using a Boolean table $dp$ of size $\mathcal{O}(n^kW^k)$. The table is indexed by vertices $u_i \in V(G)$ and weights $d_i \in [W]$ for $i \in [k]$. We define a cell:
    \begin{alignat*}{3}
        dp[ u_1, u_2, \ldots, u_k, d_1, d_2, \ldots, d_k ] := &\texttt{ TRUE } &&\text{if there exists vertex disjoint paths } P_1, P_2, \ldots, P_k \\
        & &&\text{ such that for each } i \in [k] \text{ the following holds:} \\ 
        & &&\bullet P_i \text{ starts in } u_i \text{ and ends in } t_i \text{, and} \\
        & &&\bullet P_i \text{ has weight } d_i \text{, and} \\
        & &&\bullet \text{ an agent with salience factor } b_i \text{ is motivated by a reward } r_i \\
        & &&\phantom{\bullet} \text{to move from } u_i \text{ to } t_i \text{ in } P_i \text{.} \\
        & \texttt{ FALSE } &&\text{otherwise.}
    \end{alignat*}
    Observe that the final answer to the \textsc{EMkL} instance will by this definition be found in $dp[s_1, s_2, \ldots, s_k, \ell_1, \ell_2, \ldots, \ell_k]$. We proceed to establish the base case of our recurrence,  when the vertices align perfectly with the sinks. To comply with the definition, the entry is \texttt{TRUE} if the required distances are $0$, and \texttt{FALSE} otherwise.
    \begin{alignat*}{3}
        dp[ u_1, u_2, \ldots, u_k, d_1, d_2, \ldots, d_k ] \gets &\texttt{ TRUE } &&\text{if for every } i \in [k], u_i = t_i \text{ and } d_i = 0 \text{.} \\
        & \texttt{ FALSE } &&\text{if for every } i \in [k], u_i = t_i \text{ and } d_i \neq 0 \text{.}
    \end{alignat*}
    We move on to describe the recurrence. The idea is to move one step forward, trying every neighbor of each vertex in the current state, to see whether it is possible to find a slightly shorter partial solution we can extend.
    \begin{alignat*}{3}
        dp[ u_1, u_2, \ldots, u_k, d_1, d_2, \ldots, d_k ] \gets &\texttt{ TRUE } &&\text{if there exists } i \in [k] \text{ and } v \in N(u_i) \text{ such that:} \\
        & &&\bullet \text{ for all $j \in[k] \setminus \{i\}$, there is no path from $u_j$ to $u_i$ in $G$, and} \\
        & &&\bullet \text{ for all $j \in[k]$, $v \neq u_j$} \\
        & &&\bullet \text{ $(b_i - 1)\cdot w(u_iv) + d_i \leq r$, and} \\
        & &&\bullet \text{ $dp[u_1, \ldots, u_{i-1}, v, u_{i+1}, \ldots, d_{i-1}, d_i - w(u_iv), d_{i+1}, \ldots, d_k] = \texttt{TRUE}$.} \\
        & \texttt{ FALSE } &&\text{otherwise. }
    \end{alignat*}
    Calculating the recurrence can be done in time $\mathcal{O}(kn)$ if efficient data structures are used for storing the sets of reachable vertices for each vertex. 
    Before we prove the correctness of the recurrence, note that we can safely assume that no source contains any in-edges, and no sink contains any out-edges; otherwise we do some simple prepossessing to ensure this holds. We can further assume all sinks come at the end of a topological sort of the graph.
    
    We begin by proving the forward direction. Assume there exist vertex disjoint paths $P_1, P_2, \ldots, P_k$ which satisfy the required conditions. For a path $P_i$ and a positive integer $j_i \leq |V(P)|$, we let $P_i^{j_i}$ denote the tail of $P_i$ containing the $j_i$ last vertices. Further, let $s_i^{j_i}$ be the first vertex of $P_i^{j_i}$, and let $d_i^{j_i}$ be the weight of $P_i^{j_i}$. We claim that for every combination of $j_i$ for distinct $i \in [k]$, it holds that $dp[s_1^{j_1}, s_2^{j_2}, \ldots, s_k^{j_k}, d_1^{j_1}, d_2^{j_2}, \ldots, d_k^{j_k}] = \texttt{TRUE}$.
    
    We prove the claim by induction on the sum of the lengths of the tails, $J = \sum_{i \in [k]} j_i$. The base case occurs at $J = k$, when every tail has length $1$. The induction hypothesis when proving the claim for larger $J$ will be that the claim holds for $J-1$, and from there we work our way up to the case when $J = \sum_{i \in [k]}|V(P_i)|$, at which point the forward direction of the correctness proof is complete.
    
    In the base case $J = k$, we observe that $j_i = 1$ for every $i \in [k]$, since the domains for the $j_i$'s are subsets of strictly positive integers, and their sum is $k$. Hence, $s_i^{j_i} = t_i$ and $d_i^{j_i} = 0$. By the base case of the recurrence, it then holds that $dp[s_1^{j_1}, s_2^{j_2}, \ldots, s_k^{j_k}, d_1^{j_1}, d_2^{j_2}, \ldots, d_k^{j_k}] = \texttt{TRUE}$.
    
    For the inductive step, consider some combination of $j_i$ for all distinct $i \in [k]$ whose sum is $J$. Among the first vertices of the corresponding tails, pick the one who comes first according to some topological order, say $s_i^{j_i}$. Recall that we can assume all sinks are last in the topological order, hence $s_i^{j_i}$ is not a sink and $P_i^{j_i}$ contains at least two vertices. Let $v$ be the second vertex of $P_i^{j_i}$. We observe that all the bullet points for the recurrence to give \texttt{TRUE} is satisfied when we try $i$ and $v$: Since $s_i^{j_i}$ is the earliest in the topological order, no other start vertices has a path to it. By virtue of $P_1, P_2, \ldots, P_k$ being a valid solution, we know $v$ is not in another path, and we know that the agent with salience factor $b_i$ is willing to move along the edge $s_i^{j_i}v$ when the distance from $v$ to $t_i$ is $d_i^{j_{i}} - w(s_{i}^{j_i}v)$. And by the induction hypothesis, we know that $dp[s_1^{j_1}, \ldots, s_{i-1}^{j_{i-1}}, v, s_{i+1}^{j_{i-1}}, \ldots, d_{i-1}^{j_{i-1}}, d_i^{j_{i}} - w(s_{i}^{j_i}v), d_{i+1}^{j_{i+1}}, \ldots, d_k^{j_k}] = \texttt{TRUE}$. This concludes the proof for the forward direction.
    
    For the backward direction, assume $dp[s_1, s_2, \ldots, s_k, \ell_1, \ell_2, \ldots, \ell_k] = \texttt{TRUE}$. We retrieve a solution as follows. Initially, let $P_i = {s_i}$ for each $i \in [k]$. Next, we iteratively append vertices to the paths in the following manner: Starting with $u_j \gets s_j$ and $d_j \gets \ell_j$ for $j \in [k]$, let $i$ and $v$ be a choice in the recurrence that satisfied the bullet point requirements for $dp[u_1, u_2, \ldots, u_k, d_1, d_2, \ldots, d_k]$. Append $v$ to $P_i$, and let $d_i \gets d_i - w(u_iv)$ and $u_i \gets v$ in the next iteration. Continue until we are left with only sinks.
    
    We show that the paths created are disjoint. Assume for the sake of contradiction that two paths $P_i$ and $P_j$ are not disjoint, and let $v$ be the topologically first vertex they share. Let $v'_i$ and $v'_j$ be the predecessors of $v$ in respectively $P_i$ and $P_j$. Without loss of generality, assume $v$ was added to $P_i$ before it was added to $P_j$. When $v$ was added to $j$, it was done so when $u_i = v'_i$. But at the time, the vertex $u_j$ must have had a path to $v$ as well (through $v'_j$), hence the first condition of the recurrence is not satisfied. This is a contradiction.
    
    We observe that the distance of $P_i$ is $\ell_i$, and that by the requirements in the recurrence an agent will always be motivated to move along the path. Thus, the constructed paths do indeed form a solution. This concludes the proof.
\end{proof}

Returning to our problem \motSGbranch{} with integer weights, we make use of the following observations that enables us to give an algorithm.

\begin{proposition}[{\cite[Theorem~5.1]{Kleinberg2018timeinconsistent}}] \label{prop:motsubgraphs_are_sparse}
    If $G'$ is a minimal motivating subgraph, 
    then it contains a unique $s$-$t$ path $P \subseteq G'$ that the agent will follow. Moreover, every node of $G'$ has at most one outgoing edge that does not lie on $P$. 
\end{proposition}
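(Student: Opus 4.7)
The plan is to prove both parts by contradiction, in each case identifying an edge $e^{\ast} \in E(G')$ whose deletion preserves the motivating property, thereby contradicting the minimality of $G'$. The workhorse observation is a DAG fact: if $e^{\ast} = vy$ is not the unique minimizer of $w(vz) + d_{G'}(z,t)$ over the outgoing edges $vz$ of $v$, then after deletion $d_{G' - e^{\ast}}(\cdot,t) = d_{G'}(\cdot,t)$ at every vertex. This follows by induction along a reverse topological order of $G'$: the identity holds at $t$ trivially, holds at $v$ by the choice of $e^{\ast}$, and propagates to ancestors because any vertex $u$ recomputes $d(u,t) = \min_{uz}[w(uz) + d(z,t)]$ using only $d$-values at strictly later vertices. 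If $e^{\ast}$ is moreover not the unique minimizer of $b\cdot w(vz) + d_{G'}(z,t)$, then $\zeta_{G'-e^{\ast}}(v) = \zeta_{G'}(v)$ as well, and since $\zeta$ at any other vertex depends only on $d$-values of neighbors, $\zeta$ is preserved at every vertex.

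For Part 1, suppose the agent admits two distinct walks $W_1, W_2$ in $G'$. Let $v$ be the first vertex at which they diverge, and let $W_i$ choose edge $e_i$ at $v$. Both edges attain $\zeta(v)$, so neither is the unique $\zeta$-minimizer. At most one of $\{e_1, e_2\}$ can be the unique minimizer of $w + d$ at $v$; pick $e^{\ast} \in \{e_1, e_2\}$ that is not such a unique minimizer. Both preservation conditions of the workhorse then hold, so $d$ and $\zeta$ are preserved throughout $G' - e^{\ast}$. Consequently every agent walk in $G' - e^{\ast}$ is also an agent walk in $G'$ (just with $e^{\ast}$ no longer among the tied options at $v$), and hence reaches $t$ without any vertex exceeding reward $r$. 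Thus $G' - e^{\ast}$ is motivating, contradicting minimality.

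For Part 2, let $P$ be the unique agent walk from Part 1, and suppose some $v \in V(G')$ has two outgoing edges $e_1, e_2$ that do not lie on $P$. If $v \in V(P)$, then by Part 1 the $P$-edge at $v$ is the \emph{unique} $\zeta$-minimizer, so neither $e_1$ nor $e_2$ is a $\zeta$-minimizer; if $v \notin V(P)$, the agent never visits $v$ and so $\zeta(v)$ is irrelevant. Either way, pick $e^{\ast} \in \{e_1, e_2\}$ that is not the unique minimizer of $w + d$ at $v$; then by the workhorse $d$ is preserved everywhere and $\zeta$ is preserved at every agent-reachable vertex. The agent therefore still follows $P$ to $t$ in $G' - e^{\ast}$, so $G' - e^{\ast}$ is motivating, contradicting minimality.

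The most delicate point is verifying that these deletions do not break motivation at upstream vertices: even though the agent's walk is determined only at agent-reachable vertices, removing an edge at some $v$ can in principle change the shortest-path landscape far upstream and push some perceived cost above $r$. The choice of $e^{\ast}$ above is exactly engineered to prevent this, and the DAG recurrence for $d$ cascades the invariant upward. A small uniqueness-style case analysis at $v$ (only one edge can be the \emph{unique} minimizer of $w + d$) guarantees such an $e^{\ast}$ always exists, which is the key step that makes the whole argument go through.
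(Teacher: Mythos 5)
The paper does not prove this proposition; it simply cites it as Theorem~5.1 of Kleinberg and Oren~\cite{Kleinberg2018timeinconsistent}, so there is no in-paper proof to compare against. Your deletion-by-minimality argument is correct: the workhorse observation that removing an edge which is not the unique minimizer of $w+d$ (and not the unique minimizer of $b\cdot w+d$, where relevant) at its tail leaves $d(\cdot,t)$ and $\zeta(\cdot)$ invariant throughout the DAG is exactly what is needed, and the small uniqueness case analysis at $v$ correctly guarantees a deletable $e^{\ast}$ exists in both parts. This is, to my knowledge, essentially the same structure as the proof in the cited reference.
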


Note that a consequence of Proposition~\ref{prop:motsubgraphs_are_sparse} is that all branching vertices of a minimal motivating subgraph are on the path $P$.

\begin{definition}[Merging vertex]
    The {\em in-degree} of a vertex $v$ in a directed graph $G$ is the number of edges in $G$ that have $v$ as its second endpoint. We  say that $v$ is a \emph{merging vertex}, if its in-degree is at least two.
\end{definition}

\begin{observation} \label{obs:mergings_lek}
    Consider an instance of the time-inconsistent planning model $M = (G, w, s, t, r, b)$, and let $G' \subseteq G$ be a minimal motivating subgraph with $k$ branchings. Then there are at most $k$ merging vertices in $G'$.
\end{observation}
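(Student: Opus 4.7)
The plan is to use Proposition~\ref{prop:motsubgraphs_are_sparse} as the central tool, which tells us $G'$ contains a unique $s$-$t$ path $P$ (the path the agent follows) and that every vertex has at most one outgoing \emph{side edge}, i.e., an edge not on $P$. Since all branching vertices lie on $P$ (noted in the remark after the proposition), the plan is to count side edges in two ways and compare.

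First I would establish a structural warm-up claim: in a minimal motivating subgraph, every vertex of $V(G')\setminus V(P)$ has out-degree exactly one. The upper bound of one comes directly from the proposition. For the lower bound, I would use minimality: if some off-$P$ vertex $v$ were a dead end, then the agent (who never leaves $P$) would never reach $v$, and $v$ could not serve as a tempting alternative either since its perceived cost to $t$ would be $\infty$; hence deleting $v$ together with its incoming edges would leave the subgraph motivating, contradicting minimality. A reverse-topological induction then gives that every off-$P$ vertex has a path to $t$ and, by the proposition, a \emph{unique} outgoing edge.

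Now I would double-count the set $S$ of side edges of $G'$. Counting by tails: every side edge starts either at a branching vertex of $P$ (each such vertex contributes exactly one side edge, since one of its outgoing edges is already on $P$) or at an off-$P$ vertex (which contributes exactly one by the warm-up claim), giving
\[
|S| \;=\; k + |V(G')\setminus V(P)|.
\]
Counting by heads: each off-$P$ vertex is reachable from $s$ and all its in-edges are side edges, so it absorbs at least one side edge, and at least two if it is a merging vertex; each merging vertex on $P$ (other than $s$, which we may assume has in-degree $0$) absorbs at least one side edge, because its on-$P$ predecessor already contributes one in-edge, so it needs a side in-edge to reach in-degree $\geq 2$. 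Writing $M_P$ and $M_{\bar P}$ for merging vertices on and off $P$, this gives
\[
|S| \;\geq\; |M_P| + |V(G')\setminus V(P)| + |M_{\bar P}|.
\]
Combining the two expressions yields $k \geq |M_P| + |M_{\bar P}|$, proving the observation. The only delicate step is the minimality argument in the warm-up claim; the rest is bookkeeping.
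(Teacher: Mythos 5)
The paper states this observation without an accompanying proof, so there is no canonical argument to compare yours against; your proof is correct. The warm-up claim, that every vertex of $V(G')\setminus V(P)$ has out-degree exactly one, follows from Proposition~\ref{prop:motsubgraphs_are_sparse} (upper bound of one) together with minimality (lower bound, since a dead-end off-$P$ vertex lies on no $u$-$t$ path for any $u$, so removing it leaves every perceived cost, and hence the agent's behavior, unchanged). Your two counts of the side-edge set $S$ then yield $k + |V(G')\setminus V(P)| = |S| \geq |V(G')\setminus V(P)| + |M_P| + |M_{\bar P}|$, which gives the bound. You can, however, reach the same conclusion slightly more directly without singling out $P$ or side edges: by minimality every vertex other than $t$ has out-degree at least one and every vertex other than $s$ has in-degree at least one; Proposition~\ref{prop:motsubgraphs_are_sparse} caps out-degrees at two, with exactly $k$ vertices attaining two; so $|E(G')|=\sum_v \operatorname{outdeg}(v) = (|V(G')|-1)+k$ while $|E(G')|=\sum_v \operatorname{indeg}(v) \geq (|V(G')|-1)+m$ for $m$ merging vertices, giving $m \leq k$. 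This is the same degree-sum identity as yours, just undisguised; your version keeps the path $P$ explicit, which fits the paper's framing around $P$, at the cost of a bit of extra bookkeeping.
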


We now give the algorithm for \motSGbranch{} with integer weights, where we use the algorithm for \exactMotKlink{} above as a subroutine. In short, the approach is to first guess which $\mathcal{O}(k)$ vertices that are ``interesting'' in the solution, \ie have either in-degree or out-degree (or both) larger than $1$ in $G'$, and then try every possible way of connecting these points together using disjoint paths. Finally, we guess how heavy each such path should be, and apply the algorithm for \exactMotKlink{}. For details, see Algorithm~\ref{alg:MSGBW}.

\begin{algo}[\motSGbranch{} with integer weights] \label{alg:MSGBW}
    {\em Input:} An instance $(G, w, s, t, b, r, k)$ of \motSGbranch{}, whose edge weights are integer that sum to $W$. {\em Output:} \texttt{TRUE} if there exists a motivating subgraph $G' \subseteq G$ with at most %
    $k$ branchings, \texttt{FALSE} otherwise.
    
    We assume that $k \geq 1$, as otherwise we apply Algorithm~\ref{alg:motpath}. We also assume that there exists no motivating subgraph with strictly less than $k$ branching vertices; if we are unsure, we first run this same algorithm with parameter $k-1$ (this will cause an extra factor $k$ in the runtime).
    
    We begin the algorithm by guessing the ``interesting'' vertices (in addition to $s$ and $t$) of the subgraph we are looking for, $G'$. We guess:
    \begin{itemize}
        \item A set $B$ of $k$ distinct branching vertices; we let $B = \{u_1, u_2,\ldots,u_k\} \subseteq V(G)$ such that each selected vertex has out-degree at least two in $G$. Vertices are named according to their unique topological order --- if there is no unique such order, skip this iteration (this is safe, since if $B$ does not have a unique topological order, then there can not exist a path that visit all of $B$).
        \item A set $B^\star$ of $k$ distinct ``next-step'' vertices; these are the immediate next vertices our agent will go to when standing at a branching vertex in $G'$. We let $B^\star = \{u_1^\star,u_2^\star,\ldots,u_k^\star\} \subseteq V(G)$ such that for each $i \in [k]$, $u_i^\star$ is in the out-neighborhood of $u_i$. Furthermore, for each $i \in [k-1]$, $u^\star_i$ must have a path to $u_{i+1}$ (unless $u^\star_i = u_{i+1}$). 
        \item A set $B^\diamond$ of $k$ (not necessarily distinct) ``shortcut'' vertices; these are vertices the agent will {\em not} choose to go to from a branching vertex in $G'$. We let $B^\diamond = \{u^\diamond_1,u^\diamond_2,\ldots,u^\diamond_k\} \subseteq V(G)$ be such that for each $i \in [k]$, $u^\diamond_i$ is in the out-neighborhood of $u_i$, yet is different from $u^\star_i$.
        \item A set $B^\succ$ of $k$ (not necessarily distinct) merge vertices; these are the vertices of $G'$ with in-degree at least two. We let $B^\succ = \{u^\succ_1, u^\succ_2, \ldots, u^\succ_k\} \subseteq V(G)$.
    \end{itemize}
    While it is at this point mostly clear how the agent's path $P$ will move through our set of vertices (it will follow the unique topological order of $B \cup B^\star$) we still need to guess how the shortcuts behave, and in particular how the merge vertices interact.
    
    Towards this purpose, we create the (unweighted) {\em reachability graph} $H$, which illustrates every possible way of interconnecting the interesting vertices of $G'$ we just guessed. Let $V(H) = B \cup B^\star \cup B^\diamond \cup B^\succ \cup \{s, t\}$, and let there be an edge if one vertex is reachable from another in $G$ without going through vertices of $H$ --- in other words, let $E(H) = \{uv \in V(H) \times V(H) \mid \text{ there is a path from }u\text{ to } v \text{ in }G[(V(G)\setminus V(H)) \cup \{u, v\}]\}$.
    
    We are now ready to guess exactly how the interesting points are interconnected in $G'$. We will do this by guessing $H'$, a graph which can be obtained from $G'$ (if it exists) by repeatedly smoothing non-interesting vertices. {\em Smoothing} a vertex is possible when it has both in-degree and out-degree exactly $1$, and entails replacing the vertex by an edge whose weigh is the sum of the two previous edge weights. In particular, we guess:
    \begin{itemize}
        \item A subgraph $H' \subseteq H$ that obeys the following constraints:%
        \begin{itemize}
            \item $V(H') = V(H)$.
            \item For each $i \in [k]$, the out-neighbors of $u_i$ are exactly $u^\star_i$ and $u^\diamond_i$.
            \item For each non-branching vertex $u \in V(H') \setminus (B \cup \{t\})$, it has exactly one out-neighbor.
            \item There exists an $s$-$t$ path $P'$ that contains the edge $u_iu^\star_i$ for every $i \in [k]$.
        \end{itemize}
        \item A weight function $w' : E(H') \to \mathbb{Z}_{\geq 0}$ that obeys the following constraints:
        \begin{itemize}
            \item  The sum of edges is bounded by $W$, i.\,e.~$\sum_{e \in E(H')}w'(e) \leq W$.
            \item The weight function $w'$ is consistent with $w$: for each $e \in E(H')\cap E(G)$, $w'(e) = w(e)$.
            \item For each branching vertex $u_i$, an agent standing there must evaluate the path along $P'$ to be strictly better than moving to $u^\diamond_i$. More formally, for each $i \in [k]$, it holds that $b \cdot w'(u_iu^\star_i) + \textsc{Dist}_{H'}(u^\star_i, t) < b \cdot w'(u_iu^\diamond_i) + \textsc{Dist}_{H'}(u^\diamond_i, t)$.
        \end{itemize} 
    \end{itemize}

    We now create an instance $I$ of \exactMotKlink{}:
    \begin{itemize}
        \item Let $G$ be the graph.
        \item Let $w$ be the weight function.
        \item Let the source terminals be the startpoints of edges $E(H')$.
        \item Let the sink terminals be the endpoints of edges $E(H')$.
        \item Let the target distances be the weight of edges in $E(H')$ according to $w'$.
        \item Salience factor: If the corresponding edge $uv \in E(H')$ is also in $P'$, then let the salience factor be $b$. Otherwise, let it be $1$.
        \item Reward: If the corresponding edge $uv \in E(H')$ is also in $P'$, let the reward be $r$ minus the shortest distance from $v$ to $t$ in $H'$. Otherwise, let the reward equal to the target distance.
    \end{itemize}

    If $I$ is a yes-instance of \exactMotKlink, we return \texttt{TRUE}. If all created $I$-instances across all guesses are no-instances, we return \texttt{FALSE}.
\end{algo}

\begin{proof}[{Proof of Theorem~\ref{theoremXP}}]
    Recall that the theorem states that \motSGbranch{} can be solved in time $(|V(G)| \cdot W)^{\mathcal{O}(k)}$. We first prove that Algorithm~\ref{alg:MSGBW} is correct, and return to runtime at the end of the proof.
    
    Observe that the pre-processing steps do not change the answer. For the forward direction, assume $G' \subseteq G$ is a minimal solution witnessing that the input is a yes-instance with the minimum possible number of branching vertices, and let $P \subseteq G'$ be the path taken by the agent in $G'$. Let $k$ be the number of branchings in $G'$ --- we assume $k \geq 1$, as otherwise Algorithm~\ref{alg:motpath} would suffice. Let $B = \{u \in V(G') \mid |N_{G'}(u)| \geq 2\}$. Since vertices of $B$ all have out-degree at least $2$ in $G$, this set $B$ will be guessed by Algorithm~\ref{alg:MSGBW}.
    
    Similarly, let $B^\star$ be the set of vertices which are immediate successors to a vertex of $B$ in $P$, let $B^\diamond$ be the neighbors of vertices of $B$ that are not an immediate successor of that same vertex in $P$, and let $B^\succ$ be the set of vertices of in-degree at least two in $G'$. Due to Observation~\ref{obs:mergings_lek} it holds that $|B^\succ| \leq k$, so we see that Algorithm~\ref{alg:MSGBW} will at some point guess all these sets correctly.
    
    Imagine that we create the reachability graph $H'$ by repeatedly smoothing all vertices of $G'$ that have both in-degree and out-degree exactly $1$, unless the vertex is in one of the sets $B^\star, B^\diamond$. In the smoothing process of a vertex $u$, we give the resulting edge weight equal to the sum of edge weights previously incident to $u$. By the minimality of $G'$, the graph that remains is on the vertex set $B \cup B^\star \cup B^\diamond \cup M \cup \{s, t\}$. The number of edges in $H'$ is exactly $V(H') + k - 1$, since every vertex has one out-edge, except the $k$ branching vertices, which have two each, and the sink, which has none. Hence, $H'$ and its weight function will be guessed by Algorithm~\ref{alg:MSGBW}.
    
    Conducting the same smoothing process on $P$ will yield the path $P'$ found by the algorithm as well. By virtue of $G'$ being a valid solution with $P$ being the path taken by the agent, the iteration where all of the above is guessed correctly will not be skipped due to some branch tempting the agent to walk off the path. We observe that the resulting instance of \exactMotKlink{} is a yes-instance --- the smoothed segments of $G'$ to obtain $H'$ are internally vertex disjoint and have length equal to the corresponding edge in $H'$. For the segments with motivation requirements, that is, edges in $P'$ corresponding to path segments of $P$, we know that they satisfy the motivation requirement because they do so in $G'$.
    
    For the backward direction, assume the algorithm returned \texttt{TRUE}, and let $B$, $B^\star$, $B^\diamond$, $B^\succ$, $H'$, and $w'$ be the guesses that led to this conclusion, let $P'$ be the found path in $H'$, and let $I$ be the created yes-instance of \exactMotKlink{}. We build a solution $G'$ by gluing together the vertices of $H$ with the paths witnessing that $I$ is a yes-instance, and let $P$ denote the expansion of $P'$ in $G'$. The agent will never be tempted to walk away from $P$, since the next step of $P$ always appears like the best option at all branching points in every guess of $w'$. It remains to show that the agent is indeed motivated to move at all, for every edge of $P$.
    
    Every edge $uv \in E(P)$ is part of some segment found in the solution to $I$. Let $s'$ and $t'$ be the source and sink in $I$ where $uv$ was part of the solution. Since $uv$ is in $P$, the edge $s't'$ was in $P'$, and as such the corresponding salience factor $b'$, was set to $b$ in $I$. Since $I$ is a yes-instance, the agent is indeed motivated to move along every segment between $s'$ and $t'$, including across $uv$.
    
    Finally, we argue for the runtime. The number of ways to guess $B$, $B^\star$, $B^\diamond$, and $B^\succ$, is $\mathcal{O}(n^{4k})$. Notice that the number of edges in $H'$ is exactly $|V(H')| + k' - 1 \leq 5k + 1$. However, the out-edges of $B$ and $t$ are not up for guessing, and the other vertices each have exactly one out-edge each in $H'$. The number of ways to guess $H'$ is thus $\mathcal{O}((4k)^{3k})$. For the weights, $2k$ of them are already fixed because they exist in $E(G)$, so it remains to guess the weights for at most $3k + 1$ of them. The number of ways to guess $w'$ is bounded by $\mathcal{O}(W^{3k + 1})$. Checking validity of weight functions takes $\mathcal{O}(k)$ and constructing $I$ takes $\mathcal{O}(n+m)$, but these will be dominated by the $\mathcal{O}(kn^{5k+2}W^{5k+1})$ time spent to solve the \exactMotKlink{} instance. Including the extra factor $k$ for incrementally trying larger values of $k$, gives a total runtime of $\mathcal{O}(kn^{4k}(4k)^{3k}W^{3k + 1}kn^{5k+2}W^{5k+1})$, which simplifies to $(nW)^{\mathcal{O}(k)}$.
\end{proof}
  

      
      \section{Proof of Theorem~\ref{theoremW1}}
      In this section we make the observation that \motSGbranch{} can not be solved in time $(|V(G)|\cdot W)^{\mathcal{O}(1)} \cdot f(k)$ for any function $f$ of $k$ only, unless FPT=W[1].
      
      It follows from Slivkins~\cite{Slivkins10} that the \kLinkage{} problem is W[1]-hard parameterized by $\ell$. In~\cite{albers2019motivating}, Albers and Kraft give a reduction from the \kLinkage{} problem to \motSG{} where all feasible minimal motivating subgraphs have exactly $\ell$ branching vertices. Moreover, the edge weights in their reduction --- when scaled to integers --- are bounded by a polynomial function of $\ell$. It thus follows that a run-time of $(|V(G)|\cdot W)^{\mathcal{O}(1)} \cdot f(k)$ for \motSGbranch{} would place the \kLinkage{} problem in FPT.

      \section{Conclusion and further work}
      
       We have shown that the \motSGbranch{} problem is polynomial-time solvable when $k=0$, and NP-complete otherwise (Theorem~\ref{theoremNPdich}). However, when edge weights are (scaled to) integers and their sum is bounded by $W$, we gave a pseudo-polynomial algorithm which solves the problem in time $(|V(G)|\cdot W)^{\mathcal{O}(k)}$ (Theorem~\ref{theoremXP}). Finally, we observed that an algorithm with run-time $(|V(G)|\cdot W)^{\mathcal{O}(1)} \cdot f(k)$ where $f$ is a function of $k$ only is not possible unless FPT=W[1] (Theorem~\ref{theoremW1}).
      
      We end the paper with an open question and a reflection for further research. First, a question; in Theorem~\ref{theoremW1}, a more careful analysis reveals that the statement holds even when $W \in \mathcal{O}(k^2)$. But is it also the case when $W$ is a constant?

      Finally, a reflection. In Theorem~\ref{theoremXP} we give a pseudo-polynomial algorithm for \motSGbranch{} when the edge weights are scaled to integer values and the number of branchings is constant. While we can reasonably assume that edge weights are represented as fractions or integers after storing them in a computer, this suggests that the values might have been quantized or approximated in some way. However, this rounding could potentially alter the solution space. In particular when the cost of two paths are perceived to be almost equal at the branching point --- then rounding values even a tiny bit can have big consequences. To overcome this, one might change the model such that the perceived difference of costs must exceed an epsilon for the agent's choice to be unequivocal.

\bibliographystyle{unsrt}  
\bibliography{references}  








\end{document}